\documentclass[11pt]{article}
\usepackage[utf8]{inputenc}
\usepackage{fullpage}
\usepackage{amssymb}

\usepackage[symbol]{footmisc}

\usepackage{amsmath,bbm}
\usepackage{amsfonts,amssymb}

\usepackage{hyperref}

\usepackage{graphicx}
\usepackage{epsfig}
\usepackage{times}
\usepackage{verbatim}
\usepackage{color}

%begin: commands
\newcommand{\be}{\begin{equation}}
\newcommand{\ee}{\end{equation}}
\newcommand{\bq}{\begin{eqnarray}}
\newcommand{\eq}{\end{eqnarray}}
\newcommand{\bea}{\begin{eqnarray}}
\newcommand{\eea}{\end{eqnarray}}
\newcommand{\ba}{\begin{align}}
\newcommand{\ea}{\end{align}}

\newcommand{\Tr}{{\rm Tr}}

\newcommand{\bC}{\mathbbm{C}}

\newcommand{\cV}{\mathcal{V}}

\newcommand{\cL}{\mathcal{L}}
\newcommand{\cA}{\mathcal{A}}

\newcommand{\cW}{\mathcal W}
\newcommand{\cM}{\mathcal M}

\newcommand{\cK}{\mathcal K}

\newcommand{\cT}{\mathcal T}
\newcommand{\cQ}{\mathcal Q}

\newcommand{\calS}{\mathcal{S}}

\newtheorem{theorem}{Theorem}
\newtheorem{lemma}[theorem]{Lemma}

\newtheorem{definition}[theorem]{Definition}

\def\qed{\leavevmode\unskip\penalty9999 \hbox{}\nobreak\hfill
    \quad\hbox{\leavevmode  \hbox to.77778em{%
               \hfil\vrule   \vbox to.675em%
               {\hrule width.6em\vfil\hrule}\vrule\hfil}}
     \par\vskip3pt}
\newenvironment{remark}{\vspace{1.5ex}\par\noindent{\it Remark}}
    {\hspace*{\fill}$\Box$\vspace{1.5ex}\par}

\definecolor{mygray}{gray}{0.6}

% end: commands

% Pawel's commands
\renewcommand{\>}{\rangle}
\newcommand{\<}{\langle}

\newcommand{\cJ}{\mathcal{J}}

\usepackage{tikz}
\usetikzlibrary{quantikz}
\usepackage{adjustbox}

\bibliographystyle{alpha}

\title{Szegedy Walk Unitaries for Quantum Maps}

\author{
Pawel Wocjan$^*$ \and Kristan Temme\thanks{IBM Quantum, IBM T.J. Watson Research Center, 
Yorktown Heights, NY 10598, USA. 
\newline
\textsf{\{Pawel.Wocjan, kptemme\}@ibm.com}} 
}

\begin{document}

\maketitle

%%%

\abstract{Szegedy developed a generic method for quantizing classical algorithms based on random walks [Proceedings of FOCS, 2004, pp. 32-41]. A major contribution of his work was the construction of a walk unitary for any reversible random walk. Such unitary posses two crucial properties: its eigenvector with eigenphase $0$ is a quantum sample of the limiting distribution of the random walk and its eigenphase gap is quadratically larger than the spectral gap of the random walk.

It was an open question if it is possible to generalize Szegedy's quantization method for stochastic maps to quantum maps. We answer this in the affirmative by presenting an explicit construction of a Szegedy walk unitary for detailed balanced Lindbladians -- generators of quantum Markov semigroups -- and detailed balanced quantum channels. 

We prove that our Szegedy walk unitary has a purification of the fixed point of the Lindbladian as eigenvector with eigenphase $0$ and that its eigenphase gap is quadratically larger than the spectral gap of the Lindbladian.  To construct the walk unitary we leverage a canonical form for detailed balanced Lindbladians showing that they are structurally related to Davies generators.  We also explain how the quantization method for Lindbladians can be applied to quantum channels.

We give an efficient quantum algorithm for quantizing Davies generators that describe many important open-system dynamics, for instance, the relaxation of a quantum system coupled to a bath. Our algorithm extends known techniques for simulating quantum systems on a quantum computer. }

\newpage
\tableofcontents

%%%

\section{Introduction}

%%%

Szegedy developed a generic method for quantizing algorithms based on random walks\footnote{We will use the terms random walk, Markov chain, stochastic matrix, and stochastic map synonymously throughout the manuscript.} \cite{szegedy04}. 
An important contribution of this work was the quantization of any reversible (also called detailed balanced) random walk in the following sense: Let $P=(p_{yx})_{x,y\in\Omega}$ denote the (column) stochastic matrix representing the reversible random walk on state space $\Omega$ with limiting distribution $\pi=(\pi_x)_{x\in\Omega}$; Szegedy showed how to construct a corresponding walk unitary $W(P)$ so that its unique eigenvector with eigenvalue $1$ (or equivalently with eigenphase $0$) is the quantum sample\footnote{More precisely, the walk unitary $W(P)$ acts on the Hilbert space $\bC^{|\Omega|}\otimes \bC^{|\Omega|}$ and the corresponding eigenvector with eigenvalue $1$ is the state $|\pi\> = \sum_{x\in\Omega} \sqrt{\pi_x} |x\> \otimes \sum_{y\in\Omega} \sqrt{p_{yx}} |y\>$.} 
\begin{align}\label{eq:qsample_pi}
    |\pi\> = \sum_{x\in\Omega} \sqrt{\pi_x} |x\>
\end{align}
that is a coherent encoding of $\pi$.
The reason for the many quantum speed-ups is that the phase gap of walk unitary $W(P)$ is $\sqrt{\Delta}$, where $\Delta$ denotes the spectral gap of the stochastic matrix $P$. Szegedy's construction can now be understood using the quantum singular value transformation \cite{gilyen19quant_sing_val} that provides a unifying approach to many quantum algorithms and methods.

The purpose of the present work is to study the quantization of quantum maps. Let $\cT$ be any detailed balanced quantum map with fixed point $\sigma$. 

We construct a unitary $W(\cT)$ such that its eigenvector with eigenphase $0$ is the purification 
\begin{align}
    |\sigma^{1/2}\> &= (\sigma^{1/2} \otimes I)|\Omega\>
\end{align}
of $\sigma$, where $|\Omega\>=\sum_{x} |x\> \otimes |x\>$ denotes the (unnormalized) maximally entangled state. We prove that the eigenphase of $W(\cT)$ is quadratically amplified compared to the spectral gap of the quantum map $\cT$.  We also present quantum circuits that efficiently implement $W(\cT)$.  To accomplish these goals we proceed as follows.

First, we develop a generic framework for quantizing continuous time  purely irreversible detailed balanced quantum maps. More precisely, we show how to quantize quantum Markov semigroups. Let $\cL$ denote the Lindbladian, that is, the generator of a quantum Markov semigroup with fixed point $\sigma$. Assuming that $\cL$ satisfies the detailed balanced condition with respect to $\sigma$, we show how to construct a unitary $W(\cL)$ such that its eigenvector with eigenphase $0$ is a purification of $\sigma$ having the above form.
We leverage the fact that detailed balanced Lindbladians can be expressed in a certain canonical form showing that they essentially have the structure of Davies generators. 
We relate the unitary $W(\cL)$ to a quantum discriminate $\cQ$, which arises from $\cL$ through a similarity transformation defined with respect to the fixed point $\sigma$. Analogously to the classical setting the quantum discriminate $\cQ$ is shown to be independent of the fixed point $\sigma$ and even the particular form in which the detailed balance condition is stated (due to the non-commutative nature of quantum mechanics, there are several natural notions of detailed balanced).  Centrally, we prove that the phase gap of $W(\cL)$ is quadratically amplified compared to the spectral gap of $\cL$.   We also discuss how the problem of quantizing detailed balanced quantum channels reduces to the problem of quantizing detailed balanced Lindbladians.

Second, we examine for which quantum maps the above quantization method can be realized efficiently on a quantum computer. We focus on the special case of Davies generators that have thermal Gibbs states $e^{-\beta H} / \Tr\big({e^{-\beta H} \big)}$ for Hamiltonian $H$ and inverse temperature $\beta$ as fixed points. We show how to efficiently quantize  Davies generators provided that the energies of the Hamiltonian $H$ can be resolved, for instance, when we have access to a block-encoding of the Hamiltonian $H$ whose energies satisfy a rounding promise and we use the energy estimation method in \cite{rall2021}.

The quadratic gap amplification is at the heart the quantum speed-ups of many  classical random walk based algorithms \cite{somma2008quantum, poulin2009, harrow2020adaptive, arunachalam2020simpler}. This polynomial speed-up is measured relative to the mixing time of the classical walk. For quantum maps, recent work \cite{olkiewicz1999hypercontractivity,Chi-squared-mising-stuff,kastoryano2013quantum,bardet2017estimating,bardet2018hypercontractivity,datta2020relating} has investigated the mixing behavior of the corresponding Markov process. Tools have been developed to bound the mixing time and to analyze the spectral gaps of detailed balanced quantum maps. For several explicit examples \cite{temme2013lower,kastoryano2016quantum, temme2017thermalization, capel2020modified,bardet2021group,bardet2021modified,lucia} spectral gap bounds could be obtained.

A special case of a detailed balanced quantum map was introduced in the context of the quantum Metropolis algorithm \cite{temme2011quantum} to prepare the Gibbs state of a quantum Hamiltonian with a time evolution that can be simulated efficiently. The Gibbs state can be prepared efficiently if the corresponding quantum map is rapidly mixing. A Szegedy walk unitary has been constructed for the classical Metropolis-Hastings algorithm \cite{somma2008quantum}. The Metropolis-Hastings random walk prepares the Gibbs state for a Hamiltonian that is diagonal in the computational basis. An extension of this walk algorithm to non-diagonal quantum Hamiltonian subject to specific assumptions has been constructed in \cite{yung}. This work assumes access to a projective measurement that enables one to distinguish between some arbitrary but fixed eigenvector basis of the system Hamiltonian.  This assumption essentially makes it possible to reduce the problem to a classical Metropolis random walk on the fixed eigenbasis. We emphasize that when the system Hamiltonian has degenerate spectra such eigenbasis measurement cannot be realized by measuring the energies of the Hamiltonian -- even if one can perfectly resolve the energies.  This is because one cannot distinguish between some arbitrary vectors within a degenerate eigenspace without any additional assumptions. Furthermore, in the generic situation of energy estimation with finite resources, any quantum Hamiltonian on an exponentially large Hilbert space with polynomial bounded operator norm will exhibit  degeneracies. 
The present work enables the direct ``quantization''  of thermalizing quantum maps such as Davies generators and therefore does not need to make any assumptions on the identifiabilty of some eigenbases.

The paper is organized as follows.
In Section~\ref{sec:Lindbladians}, we give an introduction to detailed balanced Lindbladians and present a canonical form for them.  We also state our main result on quantizing detailed balanced Lindbladian. Its proof is presented  in the last section, leveraging the results for Davies generators that are worked out in Sections~\ref{sec:Davies} and 
\ref{sec:Davies_approximate}.
% section 3
In Section~\ref{sec:Davies}, we show how to quantize Davies generators assuming that the coupling operators are reflections and that ideal energy estimation of the Hamiltonian is possible.
% section 4 
In Section~\ref{sec:Davies_approximate}, we present an efficient quantum algorithm for quantizing Davies generators using the algorithm for approximate energy estimation in \cite{rall2021}.
% section 5
In Section~\ref{sec:generalLindbladians}, we consider again general detailed balanced Lindbladians, that is, the coupling operators are not restricted to be reflections. Starting from a canonical form for general detailed balanced Lindbladians, we show that they can be always represented as Davies generators having reflections as coupling operators with the help of suitable block encodings. We conclude with open research problems.

%%%

\section{Canonical form for detailed balanced quantum maps}\label{sec:Lindbladians}

We begin by describing detailed balanced Lindbladians and presenting a canonical form for them.  We later explain how any detailed balanced quantum channel can be related to a detailed balanced Lindbladian. 

\subsection{Detailed balanced Lindbladians}

Let $\cL$ denote the generator of a quantum Markov semigroup. We consider purely irreversible Lindbladians, that is, their Hamiltonian part is equal to zero. 
We choose the Schr\"odinger picture so the Lindbladian $\cL$ describes the evolution of quantum states $\rho$ according to
\begin{align}\label{eq:masterEq}
    \dot\rho = \cL(\rho)
\end{align}
Any (purely irreversible) Lindbladian $\cL$ can be written in the form
\begin{align}
    \cL(\rho) = \sum_i 
        L_i \rho L_i^\dagger - 
        \tfrac{1}{2} \{ L_i^\dagger L_i,  \rho \}
\end{align}
where $L_i$ are the so-called Lindblad operators or jump operators. 
We require for the remainder of the manuscript that the operator norm of $\cL$ satisfies $\|\cL\| \leq 1$.  This can always be achieved by by renormalizing.  

The Lindbladian $\cL^*$ in the Heisenberg picture has the form
\begin{align}
    \cL^*(O) = \sum_i 
        L_i^\dagger O L_i - 
        \tfrac{1}{2} \{ L_i^\dagger L_i, O \}
\end{align}
where $O$ denotes an arbitrary observable. 
We point out that the Lindbladian $\cL^*$ in the Heisenberg picture is the adjoint of the Lindbladian $\cL$ in the Schr\"odinger picture with respect to the Hilbert Schmidt (HS) inner product, which is defined by $\<X,Y\>=\Tr(X^\dagger Y)$ for all matrices $X,Y$. This elementary observation will become important later. 

There are no restrictions on the jump operators $L_i$ for a general Lindbladian. 
However, it turns out that the jump operators of \emph{detailed balanced} Lindbladians have very special and useful properties that enable us to construct a corresponding walk unitary with all the desired properties.  We proceed by defining detailed balanced Lindbladians and discuss a canonical form that is the starting point for our quantization method.  

In the manuscript \cite{Chi-squared-mising-stuff} a notion of quantum detailed balance condition (DBC) was defined for every operator monotone function that obeys rather simple symmetry conditions. It was argued that not all these DBC notions are equivalent. However, it was recently shown \cite{Carlen2020} that a stronger detailed balance condition referred to as $\sigma$-DBC actually makes almost all the conditions collapse to a single condition and also enforces that the $\sigma$-DBC map always has a particular form. 

Before introducing the quantum setting, we review the classical setting to explain the analogies and differences between the two.
If $P=(P_{yx})$ is the transition matrix of a Markov chain on finite state space $\Omega$ with limiting distribution (probability vector) $\pi$, we say that the detailed balanced condition holds if $P_{yx}\pi_x = P_{xy} \pi_y$. An analytic way to formulate this condition is that $P$ is hermitian with respect to the weighted inner product on $\bC^{|\Omega|}$ defined by
$\<v, w\>_1=\sum_{x\in\Omega} \pi_x \bar{v}_x w_x$. Note that we use the subscript $1$ to denote the weighted inner product to distinguish it from the standard inner product, but also to explain the analogy to the GNS inner product in the quantum setting. Hermiticity of $P$ with respect to this inner product implies that $\<Pv, w\>_1=\<v,Pw\>_1$ holds for all vectors $v,w\in\bC^{|\Omega|}$. Setting $v$ and $w$ to be the two standard basis vectors $|x\>$ and $|y\>$, allows us to recover the original definition of the detailed balanced condition.  We can also state this condition in matrix form as $D P^T=P D$, where $D$ is the diagonal matrix whose entries are the $\pi_x$ entries of the limiting distribution.  Equivalently, we have $D^{1/2} P^T D^{-1/2}=D^{-1/2} P D^{1/2}$.  Observe that the RHS of this equality is the discriminate matrix of the reversible Markov chain. We see that this discriminate matrix is hermitian with respect to the standard inner product.

In the quantum setting, with reference matrix $\sigma$ of full rank that is not a multiple of the identity matrix, there are many candidates for such a weighted inner product. Let
\begin{align}
    \sigma = \sum_{k=1}^m p_k P_k
\end{align}
be the spectral decomposition of $\sigma$, where 
$p_k\in (0,1)$ are the $m\ge 2$ different eigenvalues with $\sum_{k=1}^m p_k=1$, and the orthogonal projectors form a resolution of the identity $\sum_{k=1}^m P_k = I$. 

For instance, given $\sigma$ as above and $s\in [0,1]$ one can define a weighted inner product by
\begin{align}\label{eq:inner_s}
    \<A, B\>_s = \<A, \sigma^{1-s} B \sigma^s \>
\end{align}
where $\<X, Y\>=\Tr(X^\dagger Y)$ denotes the Hilbert-Schmidt (HS) inner product.  There are two important special cases. For $s=1$, the inner product is called the Gelfand-Naimark-Segal (GNS) inner product, and for $s=1/2$ it is called the Kubo-Martin-Schwinger (KMS) inner product. 

The weighted inner product in (\ref{eq:inner_s}) can be written as
\begin{align}\label{eq:inner_s2}
    \<A, B\>_s = \<A, \Delta_\sigma^{1-s} B \sigma\>
\end{align}
where $\Delta_\sigma$ denotes the modular operator defined by $\Delta_\sigma X=\sigma X \sigma^{-1}$.
More generally, given any function $f:(0,\infty) \rightarrow (0,\infty)$, one can define a weighted inner product
\begin{align}\label{eq:inner_f}
    \<A, B\>_f = \<A, f(\Delta_\sigma) B \sigma\>
\end{align}
Above we use the convention that the superoperator $f(\Delta_\sigma)$ acts on $B$ alone and not on the product of $B$ and $\sigma$.
Note that $\<\cdot,\cdot\>_1$ is the GNS inner product whether $1$ is interpreted as a number, as in (\ref{eq:inner_s}), or as the constant function $f(t)=1$, as in (\ref{eq:inner_f}).

Define the superoperator $\Omega_\sigma^f=R_\sigma\circ f(\Delta_\sigma)$, where $R_X$ denotes right multiplication by $X$.  Then another way to write the weighted inner product in (\ref{eq:inner_f}) is 
\begin{align}\label{eq:inner_f_with_Omega}
    \<A,B\>_f = \<A, \Omega_\sigma^f B\>
\end{align} 

It will be useful to establish some properties of $\Omega_\sigma^f$ in the lemma below.

\begin{lemma}[Properties of $\Omega_\sigma^f$]\label{lem:Omega_sigma_f}
We have
\begin{itemize}
\item[(i)] $\Omega_\sigma^f$ acts as follows
\begin{align}
    \Omega_\sigma^f A = \sum_{k,\ell} f(p_k p_\ell^{-1}) p_\ell P_k A P_\ell
\end{align}
\item[(ii)] $[\Omega_\sigma^f]^{-1}$ is $R_{\sigma^{-1}} \circ (1/f)(\Delta_\sigma)$, where the function $(1/f)$ is defined by $(1/f)(x)=1/f(x)$ for all $x\in(0,\infty)$, and acts as follows
\begin{align}
    [\Omega_\sigma^f]^{-1} A = \sum_{k,\ell} (1/f)(p_k p_\ell^{-1}) p_\ell^{-1} P_k A P_\ell
\end{align}
\item[(iii)] $\Omega_\sigma^f$ is positive definite with respect to the HS inner product.
\end{itemize}
\end{lemma}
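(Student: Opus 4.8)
The plan is to reduce all three items to the spectral calculus of the modular operator $\Delta_\sigma$. First I would introduce the superoperators $\cP_{k\ell}(X)=P_k X P_\ell$ and record three elementary facts: $\cP_{k\ell}\cP_{k'\ell'}=\delta_{kk'}\delta_{\ell\ell'}\,\cP_{k\ell}$; $\sum_{k,\ell}\cP_{k\ell}=\id$ (since $\sum_k P_k=I$); and $\langle \cP_{k\ell}X,Y\rangle=\Tr(P_\ell X^\dagger P_k Y)=\langle X,\cP_{k\ell}Y\rangle$ for the HS inner product (cyclicity of the trace and $P_k^\dagger=P_k$). Hence $\{\cP_{k\ell}\}$ is a complete family of mutually orthogonal projections that are Hermitian with respect to $\langle\cdot,\cdot\rangle$. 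Since $\Delta_\sigma X=\sigma X\sigma^{-1}=\sum_{k,\ell}(p_k p_\ell^{-1})\,\cP_{k\ell}(X)$ and every ratio $p_k p_\ell^{-1}$ lies in $(0,\infty)$, this is a legitimate spectral resolution of $\Delta_\sigma$ (coinciding ratios may be grouped), so functional calculus gives $f(\Delta_\sigma)=\sum_{k,\ell} f(p_k p_\ell^{-1})\,\cP_{k\ell}$ and $(1/f)(\Delta_\sigma)=\sum_{k,\ell} (1/f)(p_k p_\ell^{-1})\,\cP_{k\ell}$.

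For (i) I would simply apply $R_\sigma$ and use $P_\ell\sigma=p_\ell P_\ell$: $\Omega_\sigma^f A=\big(f(\Delta_\sigma)A\big)\sigma=\big(\sum_{k,\ell} f(p_k p_\ell^{-1})P_k A P_\ell\big)\sigma=\sum_{k,\ell} f(p_k p_\ell^{-1})\,p_\ell\,P_k A P_\ell$. For (ii) the key point is that $R_\sigma$ is itself diagonal in the same basis, $R_\sigma\cP_{k\ell}=p_\ell\,\cP_{k\ell}$, so $R_\sigma$, $R_{\sigma^{-1}}$, $f(\Delta_\sigma)$ and $(1/f)(\Delta_\sigma)$ all mutually commute and $\Omega_\sigma^f=\sum_{k,\ell} p_\ell f(p_k p_\ell^{-1})\,\cP_{k\ell}$. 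Because $p_\ell>0$ and $f$ is strictly positive, every coefficient is a nonzero real; inverting blockwise gives $\sum_{k,\ell} p_\ell^{-1}(1/f)(p_k p_\ell^{-1})\,\cP_{k\ell}$, which is exactly $R_{\sigma^{-1}}\circ(1/f)(\Delta_\sigma)$ and acts on $A$ as $\sum_{k,\ell}(1/f)(p_k p_\ell^{-1})\,p_\ell^{-1}\,P_k A P_\ell$. The only thing left is to confirm this is a genuine two-sided inverse, i.e. $\big(R_{\sigma^{-1}}\circ(1/f)(\Delta_\sigma)\big)\circ\big(R_\sigma\circ f(\Delta_\sigma)\big)=\id$, which is immediate from commutativity and $p_\ell^{-1}(1/f)(p_k p_\ell^{-1})\cdot p_\ell f(p_k p_\ell^{-1})=1$.

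For (iii), the form $\Omega_\sigma^f=\sum_{k,\ell} p_\ell f(p_k p_\ell^{-1})\,\cP_{k\ell}$ with real coefficients and Hermitian orthogonal projections shows $\Omega_\sigma^f$ is Hermitian with respect to $\langle\cdot,\cdot\rangle$. Moreover, using that $\cP_{k\ell}$ is a Hermitian idempotent, $\langle A,\Omega_\sigma^f A\rangle=\sum_{k,\ell} p_\ell f(p_k p_\ell^{-1})\,\langle \cP_{k\ell}A,\cP_{k\ell}A\rangle\ge 0$, since $p_\ell>0$ and $f>0$. The sum vanishes only if $\cP_{k\ell}A=0$ for all $k,\ell$, i.e. $A=\sum_{k,\ell}\cP_{k\ell}A=0$; hence $\Omega_\sigma^f$ is positive definite.

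I do not expect a serious obstacle here — the lemma is essentially functional calculus for $\Delta_\sigma$. The two places that warrant care are (a) verifying that the $\cP_{k\ell}$ really do form a complete orthogonal family of Hermitian projections, so that the functional calculus is valid even when several ratios $p_k p_\ell^{-1}$ coincide (in particular, the diagonal ones all equal $1$, which is why we need $f$ defined and positive at $1$ and why full rank of $\sigma$ is used), and (b) keeping the left/right-multiplication bookkeeping straight when composing the operators in part (ii).
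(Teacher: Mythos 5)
Your proof is correct and takes essentially the same approach as the paper: decompose $A=\sum_{k,\ell}P_kAP_\ell$, observe that $\Delta_\sigma$ and $R_\sigma$ act diagonally (with eigenvalues $p_kp_\ell^{-1}$ and $p_\ell$ respectively) on each block and commute, and read off (i)--(iii) from the resulting spectral form. Your extra bookkeeping with the superoperators $\cP_{k\ell}$ and the remark about grouping coinciding ratios is a slightly more explicit version of the same argument the paper gives more tersely.
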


\begin{proof}
Recall that the eigenprojectors $P_k$ of $\sigma$ form a resolution of the identity so we can write $A=\sum_{k,\ell} P_k A P_\ell$.  We have $\Delta_\sigma(P_k A P_\ell)=p_k p_\ell^{-1} P_k A P_\ell$. Thus, $f(\Delta_\sigma)(P_k A P_\ell) = f(p_k p_\ell^{-1}) P_k A P_\ell$.  We have $R_\sigma (P_k A P_\ell) = p_\ell P_k A P_\ell$.  The two superoperators $f(\Delta_\sigma)$ and $R_\sigma$ commute. These observations establish the first statement. The second statement follows from $(1/f)(\Delta_\sigma) \circ f(\Delta_\sigma) = I$ and $R_{\sigma}^{-1} \circ R_\sigma = I$.
The third statement follows from 
\begin{align}
    \< A, \Omega_\sigma^f A\> = \sum_{k,\ell}
    f(p_k p_\ell^{-1}) \cdot p_\ell \cdot \< P_k A P_\ell, P_k A P_\ell\> > 0
\end{align}
for all non-zero $A$. \qed{}
\end{proof}

Since $\Omega_\sigma^f$ is hermitian with respect to HS inner product, we can equivalently apply $\Omega_\sigma^f$ to the left argument of the HS inner product in (\ref{eq:inner_f_with_Omega}).

Let $\cM$ be an arbitrary linear operator and $\cM^*$ its adjoint with respect to the HS inner product. Then, we have
\begin{align}
    \< A, \cM B\>_f 
    &= 
    \< \Omega_\sigma^f A, \cM B\> \\
    &=
    \< \cM^*(\Omega_\sigma^f A), B\> \\
    &=
    \big\<  [\Omega_\sigma^f]^{-1} \big(\cM^*(\Omega_\sigma^f A)\big), B\big\>_f
\end{align}
This means that $\cM$ is hermitian with respect to the weighted inner product $\<\cdot,\cdot\>_f$ if and only if
\begin{align}\label{eq:hermitian}
    \Omega_\sigma^f \circ \cM = \cM^* \circ \Omega_\sigma^f
\end{align}

It was shown in \cite{Carlen2020} that if a linear operator $\cM$ is hermitian with respect to $\<\cdot,\cdot\>_s$ for \emph{some} $s\in I=[0,\frac{1}{2})\cup (\frac{1}{2},1]$, then it is automatically hermitian with respect $\<\cdot,\cdot\>_s$ for \emph{all} $s\in[0,1]$, including $s=\frac{1}{2}$. Moreover, it is also hermitian with respect to $\<\cdot,\cdot\>_f$ for any $f : (0,\infty)\rightarrow (0,\infty)$.

\begin{definition}[Detailed balanced]
The Lindbladian $\cL$ in the Schr\"odinger picture is detailed balanced with respect to the state $\sigma$ if and only if the Lindbladian $\cL^*$ in the Heisenberg picture is hermitian with respect to the $\sigma$-GNS inner product.  
\end{definition}

\begin{lemma}[Detailed balanced implies fixed point]\label{lem:DB_implies_fixed_point}
Let $\cL$ be detailed balanced with respect to the state $\sigma$. Then, $\sigma$ is a fixed point of $\cL$, that is, $\cL(\sigma)=0$.
\end{lemma}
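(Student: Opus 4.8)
The plan is to use the characterization of detailed balance from equation~(\ref{eq:hermitian}), together with the basic fact that a Lindbladian in the Heisenberg picture is unital, i.e. $\cL^*(I) = 0$. Indeed, from the explicit form $\cL^*(O) = \sum_i L_i^\dagger O L_i - \tfrac{1}{2}\{L_i^\dagger L_i, O\}$ one sees immediately that $\cL^*(I) = \sum_i L_i^\dagger L_i - \tfrac{1}{2}\{L_i^\dagger L_i, I\} = \sum_i L_i^\dagger L_i - \sum_i L_i^\dagger L_i = 0$. This is the one structural input I need beyond the definition.

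Next, I would translate unitality of $\cL^*$ into a statement about the fixed point of $\cL$ using adjointness with respect to the Hilbert--Schmidt inner product. Recall from the excerpt that $\cL^*$ is the HS-adjoint of $\cL$, and that detailed balance means $\cL^*$ is hermitian with respect to the GNS inner product $\<\cdot,\cdot\>_1$, which by~(\ref{eq:hermitian}) (with $f \equiv 1$, so $\Omega_\sigma^f = \Omega_\sigma^1 = R_\sigma$) reads $R_\sigma \circ \cL^* = (\cL^*)^* \circ R_\sigma = \cL \circ R_\sigma$. Applying both sides of this operator identity to the identity matrix $I$ gives $R_\sigma(\cL^*(I)) = \cL(R_\sigma(I))$, i.e. $\cL^*(I)\,\sigma = \cL(\sigma)$. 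Since $\cL^*(I) = 0$, the left-hand side vanishes, and therefore $\cL(\sigma) = 0$, which is exactly the claim.

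I should double-check the one subtlety: whether the hermiticity identity~(\ref{eq:hermitian}) is stated for $\cL$ or for $\cL^*$. The definition says $\cL^*$ is hermitian with respect to the GNS inner product, so~(\ref{eq:hermitian}) should be applied with $\cM = \cL^*$, giving $\Omega_\sigma^1 \circ \cL^* = (\cL^*)^* \circ \Omega_\sigma^1$. Since taking the HS-adjoint twice returns the original operator, $(\cL^*)^* = \cL$, and since $\Omega_\sigma^1 = R_\sigma$ by the remark following Lemma~\ref{lem:Omega_sigma_f} (the GNS case $f \equiv 1$, where $\Omega_\sigma^f = R_\sigma \circ f(\Delta_\sigma) = R_\sigma$), the identity is $R_\sigma \circ \cL^* = \cL \circ R_\sigma$, as used above.

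There is no real obstacle here; the proof is a two-line computation once the right objects are lined up. The only thing to be careful about is bookkeeping — keeping straight which picture ($\cL$ vs.\ $\cL^*$) carries the detailed balance hypothesis, and making sure the evaluation $\Omega_\sigma^1 = R_\sigma$ is invoked correctly so that evaluating at $I$ produces $\sigma$ on the appropriate side. An alternative route, should one prefer to avoid~(\ref{eq:hermitian}), is to argue directly: for any observable $O$, $\<O, \cL(\sigma)\> = \Tr(O^\dagger \cL(\sigma)) = \Tr(\cL^*(O)^\dagger \sigma) = \<\cL^*(O), \sigma\>_{\mathrm{HS}}$, and then use the GNS-hermiticity of $\cL^*$ to move the generator onto $I$ via $\<\cL^*(O), I\>_1 = \<O, \cL^*(I)\>_1 = 0$; but one must then check that $\<X, I\>_1 = \Tr(X^\dagger \sigma) = \<X,\sigma\>_{\mathrm{HS}}$ so that the two pairings agree, which makes this essentially the same computation. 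I would present the first route as the cleanest.
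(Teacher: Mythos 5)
Your proof is correct and is essentially the same argument as the paper's: both hinge on $\cL^*(I)=0$ together with GNS-hermiticity of $\cL^*$, with the paper writing it out as $0 = \<\cL^*(I),A\>_1 = \<I,\cL^*(A)\>_1 = \Tr(\sigma\cL^*(A)) = \Tr(\cL(\sigma)A)$ for all $A$ (which is precisely your ``alternative route''), while your primary route packages the same identity as the operator equation $R_\sigma\circ\cL^* = \cL\circ R_\sigma$ evaluated at $I$.
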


\begin{proof}
Using that $\cL^*(I)=0$ and that $\cL^*$ is hermitian with respect to the $\sigma$-GNS inner product, we obtain 
\begin{align}
    0
    =
    \< \cL^*(I), A\>_1 
    =
    \< I, \cL^*(A)\>_1 
    =
    \Tr\big(\sigma \cL^*(A)\big) 
    =
    \Tr\big(\cL(\sigma) A\big) 
\end{align}
for an arbitrary observable $A$. The condition $\Tr(\cL(\sigma)A)=0$ for all $A$ implies that $\cL(\sigma)=0$. \qed{}
\end{proof}

If $\cL$ is detailed balanced with respect to $\sigma$, then it holds that
\begin{align}\label{eq:L_db_implication}
    \Omega_\sigma^f \circ \cL^* = \cL \circ \Omega_\sigma^f 
\end{align}
This follows from the characterization in (\ref{eq:hermitian}) with $\cM=\cL^*$ and $\cL^{**}=\cL$. The condition in (\ref{eq:L_db_implication}) can also be expressed as
\begin{align}\label{eq:L_db_implication_2}
    [\Omega_\sigma^f]^{1/2} \circ \cL^* \circ [\Omega_\sigma^f]^{-1/2} = 
    [\Omega_\sigma^f]^{-1/2} \circ \cL \circ [\Omega_\sigma^f]^{1/2} 
\end{align}

%%%

We are now ready to define the quantum discriminate. 

\begin{definition}[Quantum discriminate]
Let $\cL$ be a detailed balanced Lindbladian with respect to the state $\sigma$. We define the quantum discriminate $\cQ^f$ to be
\begin{align}
    \cQ^f = I + \cK^f
\end{align}
where $\cK^f$ is the operator defined to be the LHS or equivalently the RHS of (\ref{eq:L_db_implication_2}), that is,
\begin{align}
    \cK^f 
    =
    [\Omega_\sigma^f]^{1/2} \circ \cL^* \circ [\Omega_\sigma^f]^{-1/2}
    =
    [\Omega_\sigma^f]^{-1/2} \circ \cL \circ [\Omega_\sigma^f]^{1/2}
\end{align}
\end{definition}

%%%

\begin{lemma}[Quantum discriminate is hermitian]
Assume that $\cL$ is detailed balanced with respect to the state $\sigma$. Let $f$ be any function $f:(0,\infty)\rightarrow (0,\infty)$.  Then, we have: (i) the quantum discriminate $\cQ^f$ is hermitian with respect to the Hilbert-Schmidt inner product; (ii) the fixed point of $\cQ^f$ is $\tau=\sigma^{1/2}$, that is, $\cQ^f(\tau)=\tau$; (iii) the spectral gap of $\cL$ and the spectral gap of the quantum discriminate $\cQ^f$ are equal.
\end{lemma}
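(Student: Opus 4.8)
The plan is to handle the three parts in the order stated, since each uses the previous. Throughout I would rely on Lemma~\ref{lem:Omega_sigma_f}: that $\Omega_\sigma^f$ is positive definite with respect to the HS inner product guarantees that $[\Omega_\sigma^f]^{\pm 1/2}$ exist, are HS-self-adjoint, are mutually inverse, and --- since the mutually orthogonal ``matrix units'' $P_k(\cdot)P_\ell$ diagonalize $\Omega_\sigma^f$ with positive eigenvalue $f(p_kp_\ell^{-1})p_\ell$ on the $(k,\ell)$ block --- that $[\Omega_\sigma^f]^{1/2}$ acts as the scalar $\big(f(p_kp_\ell^{-1})p_\ell\big)^{1/2}$ on that block.

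For (i), I would take the HS-adjoint of the first defining expression $\cK^f=[\Omega_\sigma^f]^{1/2}\circ\cL^*\circ[\Omega_\sigma^f]^{-1/2}$; using $(\cL^*)^*=\cL$ and HS-self-adjointness of $[\Omega_\sigma^f]^{\pm1/2}$ this gives $(\cK^f)^*=[\Omega_\sigma^f]^{-1/2}\circ\cL\circ[\Omega_\sigma^f]^{1/2}$, which is the \emph{second} defining expression for $\cK^f$ --- and the equality of the two expressions is exactly the detailed balance identity (\ref{eq:L_db_implication_2}). Hence $\cK^f$ is HS-self-adjoint, and so is $\cQ^f=I+\cK^f$.

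For (ii), I would reduce $\cQ^f(\sigma^{1/2})=\sigma^{1/2}$ to $\cK^f(\sigma^{1/2})=0$ and use the form $\cK^f=[\Omega_\sigma^f]^{-1/2}\circ\cL\circ[\Omega_\sigma^f]^{1/2}$, so it suffices that $\cL\big([\Omega_\sigma^f]^{1/2}\sigma^{1/2}\big)=0$. Writing $\sigma^{1/2}=\sum_k\sqrt{p_k}\,P_k$, only the diagonal blocks contribute, and by the blockwise action of $[\Omega_\sigma^f]^{1/2}$ recorded above one gets $[\Omega_\sigma^f]^{1/2}\sigma^{1/2}=\sum_k\big(f(1)p_k\big)^{1/2}\sqrt{p_k}\,P_k=\sqrt{f(1)}\,\sigma$. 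Since $\cL$ is detailed balanced, Lemma~\ref{lem:DB_implies_fixed_point} gives $\cL(\sigma)=0$, whence $\cK^f(\sigma^{1/2})=\sqrt{f(1)}\,[\Omega_\sigma^f]^{-1/2}\cL(\sigma)=0$ and $\cQ^f(\tau)=\tau$ for $\tau=\sigma^{1/2}$.

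For (iii), the point is that $\cK^f=S\circ\cL\circ S^{-1}$ with $S=[\Omega_\sigma^f]^{-1/2}$ invertible, so $\cK^f$ and $\cL$ are similar and share the same spectrum with multiplicities; therefore $\mathrm{spec}(\cQ^f)=1+\mathrm{spec}(\cL)$. By part (i) the spectrum of $\cL$ is real, $0$ is its leading eigenvalue (Lemma~\ref{lem:DB_implies_fixed_point}), and its spectral gap $\Delta$ is the distance from $0$ to the rest of the spectrum; the shift by $1$ sends $0\mapsto 1$ (consistent with (ii)) and preserves all gaps, so $\cQ^f$ has spectral gap $\Delta$ as well. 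The step requiring the most care is (ii): one must justify that the positive square root of $\Omega_\sigma^f$ is taken blockwise (immediate from the orthogonal block decomposition and positivity of the eigenvalues) and then correctly identify the diagonal blocks of $\sigma^{1/2}$ and check that the factor $\sqrt{f(1)}$ cancels. A minor point for (iii) is to fix the convention that ``spectral gap'' means the distance from the leading eigenvalue to the remaining spectrum, so that ``equal'' is unambiguous; and if one wants the fixed point in (ii) to be \emph{unique}, this is an extra hypothesis (primitivity of $\cL$) rather than something forced by detailed balance alone.
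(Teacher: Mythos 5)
Your proposal is correct and follows essentially the same route as the paper: hermiticity from taking HS-adjoints of the two defining expressions in (\ref{eq:L_db_implication_2}), the fixed point from $[\Omega_\sigma^f]^{1/2}(\sigma^{1/2})\propto\sigma$ together with $\cL(\sigma)=0$, and the spectral gap from similarity. You are slightly more explicit than the paper (e.g.\ identifying the proportionality constant $\sqrt{f(1)}$ and spelling out the similarity argument for part (iii), which the paper leaves implicit), but the underlying argument is the same.
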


%%%

\begin{proof}
The superoperator $\Omega_\sigma^f$ is hermitian with respect to the HS inner product.  Thus, $[\Omega_\sigma^f]^{\pm 1/2}$ are also hermitian.  We now see that both sides of (\ref{eq:L_db_implication_2}) are adjoints of each other (observe that the adjoint operation reverses the order of the superoperators) and, thus, $\cK^f$ is hermitian. 

Using property (i) in Lemma~\ref{lem:Omega_sigma_f}, we obtain that $[\Omega_\sigma^f]^{1/2}(\tau)$ is proportional to $\sigma$ so that $\cK^f(\tau)=0$.
\qed{}
\end{proof}

We now see that the detailed balanced condition ensures that the Lindbladian $\cL$ has real spectrum since it is similar to the hermitian operator $\cK^f$. Moreover, its spectrum of $\cL$ is contained in the interval $[-1, 0]$; we have $-1$ as the left end of the interval because we assume that the operator norm of $\cL$ satisfies $\|\cL\| \le 1$. The fixed point $\sigma$ of $\cL$ is the eigenvector with eigenvalue $0$.  

To analyze the present case of a detailed balanced Lindbladian $\cL$ with respect to $\sigma$ and later the more special case of a Davies generator, whose fixed point is always a thermal Gibbs state, in a unifying approach, we define the corresponding ``Hamiltonian'' $h$ for $\sigma$ by setting $h=-\ln\sigma$. Its spectral decomposition is 
\begin{align}
    h = \sum_{k=1}^m \varepsilon_k \Pi_k 
\end{align}
where the ``energies'' are simply given by $\varepsilon_k = -\ln p_k$. We can now formally express the state $\sigma$ as a Gibbs state for the Hamiltonian $h$ at inverse temperature $\beta=1$, that is,
\begin{align}
    \sigma = e^{-h}
\end{align}
Note that the corresponding partition function is $Z=\sum_k e^{-\varepsilon_k}=\sum_k p_k=1$. We call the differences of the form $\omega_{k\ell}=\varepsilon_k-\varepsilon_\ell$ the Bohr frequencies of $h$, where $k,\ell=1,\ldots,m$.

It is of paramount importance that a detailed balanced Lindbladian $\cL$ can always be written in the  canonical form presented below,
which shows that detailed balanced Lindbladians essentially have the structure of Davies generators.  This follows directly from the results in \cite[Theorem 2]{alicki76detailed-balanced}, \cite[Remark  on page 101]{kossakowski77qdb}, and \cite[Theorem 2.4]{Carlen2020}. 

\begin{definition}[Canonical form for detailed balanced Lindbladians]\label{def:Lindbladian_canonical}
Let $\cL$ be a (purely irreversible) Lindbladian that is detailed balanced with respect to $\sigma$. We say the Lindbladian $\cL$ is in canonical form if it is given as a convex combination
\begin{align}
    \cL(\rho)=\sum_\alpha w_\alpha \cL^\alpha(\rho)
\end{align}
and each term $\cL^\alpha$ has the following form
\begin{align}\label{eq:cL_alpha}
    \cL^\alpha(\rho) &= 
    \sum_\omega 
    G^\alpha(\omega) \left(
    X^\alpha(\omega) \rho X^{\alpha\dagger}(\omega)
    - \frac{1}{2} \big\{ X^{\alpha\dagger}(\omega) X^\alpha(\omega), \rho \big\}
    \right)
\end{align}
For each $\alpha$, the summation index $\omega$ in (\ref{eq:cL_alpha}) runs over all different Bohr frequencies of the Hamiltonian $h$. The operators $X^\alpha(\omega)$ and scalars $G^\alpha(\omega)$ satisfy the following conditions:
\begin{align}
    X^\alpha(\omega)                &=     X^{\alpha\dagger}(-\omega) \\
    \Delta_\sigma X^\alpha(\omega)  &=     e^{-\omega} X^{\alpha}(\omega) \label{eq:eig_modular} \\
    G^\alpha(\omega)                &\ge 0 \\
    G^\alpha(\omega)                &= e^\omega G^\alpha(-\omega) \label{eq:sqrtGs}
\end{align}
\end{definition}

%%%

We now establish some elementary results that we will use to show that the quantum discriminate is independent of $f$. For each Bohr frequency $\omega$, define the set
\begin{align}\label{eq:J_omega}
    \cJ_\omega &= \{ (k,\ell) : \varepsilon_k - \varepsilon_\ell = \omega \}
\end{align}
$B$ is an eigenvector of the modular operator $\Delta_\sigma$ with eigenvalue $e^{-\omega}$ iff 
\begin{align}
    B = \sum_{(k,\ell) \in \cJ_\omega} P_k B P_\ell
\end{align}
This is because the RHS of the equation above is precisely the projection of $B$ onto the eigensubspace of $\Delta_\sigma$ with eigenvalue $e^{-\omega}$.  We will apply this characterization to $X^\alpha(\omega)$, $X^{\alpha\dagger}(\omega)$, and $X^{\alpha\dagger}(\omega) X^\alpha(\omega)$. Recall that $X^\alpha(\omega)$ is an eigenvector of $\Delta_\sigma$ with eigenvalue $e^{-\omega}$ and that $X^{\alpha\dagger}(\omega)=X^{\alpha}(-\omega)$. It is elementary to verify that $X^{\alpha\dagger}(\omega) X^\alpha(\omega)$ is an eigenvector of $\Delta_\sigma$ with eigenvalue $1$, that is, the corresponding Bohr frequency is $0$.

Property (\ref{eq:sqrtGs}) is equivalent to 
\begin{align}
    G^\alpha(\omega) e^{-\omega/2} = \sqrt{G^\alpha(\omega) \, G^\alpha(-\omega)}
\end{align}

%%%

\begin{lemma}[Quantum discriminate]\label{lem:quantum_discriminate}
Let $\cL$ be a detailed balanced Lindbladian with respect to $\sigma$. Then, the quantum discriminate $\cQ^f$ is independent of $f$. It acts as
$\cQ=I+\cK$, where
\begin{align}
    \cK(A) &= \sum_\alpha \sum_\omega
    \sqrt{G^\alpha(\omega) \, G^\alpha(-\omega)} X^\alpha(\omega) A X^{\alpha\dagger}(\omega) 
    - \frac{G^\alpha(\omega)}{2} \big\{
    X^{\alpha\dagger}(\omega) X^{\alpha}(\omega),  A
    \big\}
\end{align}
\end{lemma}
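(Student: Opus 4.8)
The goal is to compute $\cK^f = [\Omega_\sigma^f]^{-1/2} \circ \cL \circ [\Omega_\sigma^f]^{1/2}$ explicitly, using the canonical form of Definition~\ref{def:Lindbladian_canonical}, and to observe that all $f$-dependence cancels. The plan is to work term by term in the convex decomposition $\cL = \sum_\alpha w_\alpha \cL^\alpha$; since $[\Omega_\sigma^f]^{\pm 1/2}$ are fixed superoperators, it suffices to conjugate each $\cL^\alpha$, and in fact each summand indexed by a Bohr frequency $\omega$. So fix $\alpha$ and $\omega$, and consider the two pieces of $\cL^\alpha$ separately: the ``jump'' part $A \mapsto X^\alpha(\omega) A X^{\alpha\dagger}(\omega)$ and the ``decay'' (anticommutator) part $A \mapsto -\tfrac12\{X^{\alpha\dagger}(\omega)X^\alpha(\omega), A\}$.

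First I would handle the jump part. By Lemma~\ref{lem:Omega_sigma_f}(i)--(ii) and the eigenvector characterization via the sets $\cJ_\omega$, the key computation is how $[\Omega_\sigma^f]^{\pm 1/2}$ act on a product $X^\alpha(\omega) A X^{\alpha\dagger}(\omega)$. Writing $A = \sum_{p,q} P_p A P_q$ and using $\Delta_\sigma X^\alpha(\omega) = e^{-\omega} X^\alpha(\omega)$ together with $X^{\alpha\dagger}(\omega) = X^\alpha(-\omega)$, one checks that conjugating $X^\alpha(\omega)(\cdot)X^{\alpha\dagger}(\omega)$ by $[\Omega_\sigma^f]^{\pm 1/2}$ produces scalar factors $f(\cdot)^{\pm 1/2}$ and powers of eigenvalues $p_k$ that, after multiplying the two conjugations and the decomposition of the jump operator into its $(k,\ell)\in\cJ_\omega$ blocks, collapse to $\sqrt{G^\alpha(\omega)G^\alpha(-\omega)}$ once one also absorbs the $G^\alpha(\omega)$ prefactor and invokes $G^\alpha(\omega)e^{-\omega/2} = \sqrt{G^\alpha(\omega)G^\alpha(-\omega)}$ (the reformulation of (\ref{eq:sqrtGs}) stated just above). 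Concretely, the $f$-dependence cancels because $[\Omega_\sigma^f]^{-1/2}$ contributes a factor $(1/f)(p_k p_\ell^{-1})^{1/2}$ from the left/right block structure while $[\Omega_\sigma^f]^{1/2}$ contributes $f(p_k p_\ell^{-1})^{1/2}$ on the matching block (using that $X^\alpha(\omega)$ shifts $\cJ$-blocks consistently), and the product is $1$.

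Next I would handle the decay part, which is easier: $X^{\alpha\dagger}(\omega)X^\alpha(\omega)$ has Bohr frequency $0$, i.e. it is diagonal in the projectors $P_k$ and hence commutes with both $R_\sigma$ and $\Delta_\sigma$; therefore it commutes with $[\Omega_\sigma^f]^{\pm 1/2}$ as a \emph{multiplication} operator on each side. Consequently $[\Omega_\sigma^f]^{-1/2}\big(\{X^{\alpha\dagger}(\omega)X^\alpha(\omega), [\Omega_\sigma^f]^{1/2} A\}\big) = \{X^{\alpha\dagger}(\omega)X^\alpha(\omega), A\}$, and the anticommutator term passes through the conjugation unchanged, keeping its prefactor $G^\alpha(\omega)$. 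Combining the two parts, summing back over $\omega$ and over $\alpha$ with the weights $w_\alpha$ (which I would absorb into $G^\alpha$, matching the convention implicit in the lemma's statement), and adding the identity $I$ from $\cQ^f = I + \cK^f$ gives exactly the claimed formula for $\cK$; since nothing in the final expression mentions $f$, we also conclude $\cQ^f$ is independent of $f$.

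\textbf{Main obstacle.} The routine but delicate point is the bookkeeping of eigenvalue powers in the jump term: one must track that $X^\alpha(\omega)$, viewed through $A = \sum P_p A P_q$, maps the $(p,q)$-block to a combination of $(k,\ell)$-blocks with $(k,p)\in\cJ_\omega$ and $(\ell,q)\in\cJ_\omega$ (so $\varepsilon_k - \varepsilon_p = \omega = \varepsilon_\ell - \varepsilon_q$), and then verify that the product of the $f^{1/2}$, $(1/f)^{1/2}$, and $p_k^{\pm 1/2}$ factors coming from the two conjugations equals precisely $e^{-\omega/2}$ — independently of $f$ and of which blocks are involved. Establishing this cancellation cleanly, rather than getting lost in indices, is the crux; everything else is assembly.
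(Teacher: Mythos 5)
Your proposal is correct and follows essentially the same strategy as the paper: fix $\alpha,\omega$, conjugate the jump and anticommutator terms separately through $[\Omega_\sigma^f]^{\pm 1/2}$ using Lemma~\ref{lem:Omega_sigma_f} and the block structure determined by $\cJ_\omega$, and observe that the $f^{1/2}$ and $(1/f)^{1/2}$ factors cancel on matching blocks while the $p$-eigenvalue powers assemble to $e^{-\omega/2}$, which combined with (\ref{eq:sqrtGs}) yields $\sqrt{G^\alpha(\omega)G^\alpha(-\omega)}$. The only (minor) departure is that you dispatch the anticommutator term via the observation that left/right multiplication by the Bohr-frequency-$0$ matrix $X^{\alpha\dagger}(\omega)X^\alpha(\omega)$ commutes with $\Delta_\sigma$ and $R_\sigma$, hence with $[\Omega_\sigma^f]^{\pm 1/2}$, which is a slightly slicker way to reach the same conclusion the paper obtains by direct index computation.
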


\begin{proof}
To abbreviate, we set $\Omega=\Omega_\sigma^f$. We consider arbitrary but fixed $\alpha$ and $\omega$.  We drop the supscript $\alpha$ when writing $X^\alpha(\omega)$.  

First, we analyze how the similarity transformation acts on the first term in (\ref{eq:cL_alpha}).
Using properties (i) and (ii) in Lemma~\ref{lem:Omega_sigma_f}, we obtain 
\begin{align}
    & \Omega^{-1/2}\big( X(\omega) \Omega^{1/2}(A) X^\dagger(\omega) \big) \\
    &=
    \sum_{k,\ell} \sum_{m,n}
    \big[(1/f)(e^{-(\varepsilon_k - \varepsilon_\ell)}) e^{\varepsilon_\ell}\big]^{1/2}
    \big[f(e^{-(\varepsilon_m - \varepsilon_n)})    e^{-\varepsilon_n}\big]^{1/2}
    P_k X(\omega) P_m A P_n X^\dagger(\omega) P_\ell \\
    &=
    \sum_{(k,m)\in\cJ_\omega} \sum_{(\ell,n)\in\cJ_\omega}
    \big[(1/f)(e^{-(\varepsilon_k - \varepsilon_\ell)}) f(e^{-(\varepsilon_m - \varepsilon_n)})\big]^{1/2}
    e^{-\omega/2}
    P_k X(\omega) P_m A P_n X^\dagger(\omega) P_\ell \\
    &=
    \sum_{(k,m)\in\cJ_\omega} \sum_{(\ell,n)\in\cJ_\omega}
    e^{-\omega/2}
    P_k X(\omega) P_m A P_n X^\dagger(\omega) P_\ell \\
    &=
    e^{-\omega/2} X(\omega) A X^\dagger(\omega)
\end{align}
We have used that $\varepsilon_k-\varepsilon_m=\omega$ and $\varepsilon_\ell-\varepsilon_n=\omega$ must hold or otherwise the term is zero.  But in this case we have $\varepsilon_k-\varepsilon_\ell=\varepsilon_m-\varepsilon_m$. This shows that the arguments for $(1/f)$ and $f$ are the same so that the value of the product inside the square brackets simplifies to $1$.  We now use that $G(\omega) e^{-\omega/2}=\sqrt{G(\omega) G(-\omega)}$.

Second, we analyze how the similarity transformation acts on the two terms arising from the anticommutator. For the first term, we have
\begin{align}
    &
    \Omega^{-1/2} \big( X(\omega) X^\dagger(\omega) \Omega^{1/2}(A) \big) \\
    &=
    \sum_{k,\ell} \sum_{m,n}
    \big[(1/f)(e^{-(\varepsilon_k - \varepsilon_\ell)}) e^{\varepsilon_\ell}\big]^{1/2}
    \big[f(e^{-(\varepsilon_m - \varepsilon_n)}) e^{-\varepsilon_n}\big]^{1/2}
    P_k X(\omega) X^\dagger(\omega) P_m A P_n P_\ell \\
    &=
    \sum_{k,\ell} \sum_m
    \big[(1/f)(e^{-(\varepsilon_k - \varepsilon_\ell)}) \big]^{1/2}
    \big[f(e^{-(\varepsilon_m - \varepsilon_\ell)}) \big]^{1/2}
    P_k X(\omega) X^\dagger(\omega) P_m A P_\ell \\
    &=
    \sum_{k,\ell} 
    \big[(1/f)(e^{-(\varepsilon_k - \varepsilon_\ell)}) \big]^{1/2}
    \big[f(e^{-(\varepsilon_k - \varepsilon_\ell)}) \big]^{1/2}
    P_k X(\omega) X^\dagger(\omega) P_k A P_\ell \\
    &=
    \sum_{k,\ell} 
    P_k X(\omega) X^\dagger(\omega) P_k A P_\ell \\
    &= X(\omega) X^\dagger(\omega) A
\end{align}
We used first that we must have $n=\ell$ and then $m=k$ or otherwise the term is be zero. The second term is analyzed analogously. 
\qed{}
\end{proof}

%%%

\begin{theorem}[Quantization of detailed balanced Lindbladians]\label{thm:generalLindbladians}
Let $\cL$ be a detailed balanced Lindbladian $\cL$ with respect to the state $\sigma$ in canonical form as in Definition~\ref{def:Lindbladian_canonical}.
Assume that for each $\alpha$, $G^\alpha(\omega)\le 1$ and there is a block encoding of the hermitian matrix
\begin{align}
    X^\alpha = \sum_\omega X^\alpha(\omega) 
\end{align}
of the form
\begin{align}
    X^\alpha = (\<0^s| \otimes I) S^\alpha (I \otimes |0^s\>)
\end{align}
where $s$ is an integer and $S^\alpha$ is an efficiently implementable reflection that acts on the joint system consisting of an auxilliary $s$-qubit register and the system register.\footnote{This implies that $\| X^\alpha\|\le 1$.} Furthermore, assume that there is an ideal energy estimation for the Hamiltonian $h=-\ln\sigma$.

Then, there exists a corresponding walk unitary $W(\cL)$ such that
its unique eigenvector with eigenphase $0$ is the purification
$|\sigma^{1/2}\> = (\sigma^{1/2} \otimes I) |\Omega\>$ 
and its phase gap is quadratically larger than the spectral gap of the Lindbladian $\cL$. It can be obtained by standard methods for spectral gap amplification applied to the block encoding of the vectorized quantum discriminate matrix $\cQ$ for $\cL$.
\end{theorem}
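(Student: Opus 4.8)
The plan is to reduce the construction to the \emph{vectorized quantum discriminate} and then invoke the standard Szegedy/qubitization correspondence between the eigenphases of a walk unitary and the eigenvalues of a block-encoded Hermitian matrix. First I would vectorize: under the isometry $A\mapsto|A\>=(A\otimes I)|\Omega\>$ identifying the matrices on $\cH$ (with the HS inner product, $d=\dim\cH$) with $\bC^d\otimes\bC^d$, the map $A\mapsto XAY^\dagger$ becomes $X\otimes\overline Y$ and $A\mapsto MA+AM$ becomes $M\otimes I+I\otimes\overline M$. Applying this to $\cQ=I+\cK$ from Lemma~\ref{lem:quantum_discriminate}, and using $\sqrt{G^\alpha(\omega)G^\alpha(-\omega)}=G^\alpha(\omega)e^{-\omega/2}$, I obtain
\begin{align}
Q &= I+\sum_\alpha\sum_\omega\Big(G^\alpha(\omega)e^{-\omega/2}\,X^\alpha(\omega)\otimes\overline{X^\alpha(\omega)}\nonumber\\
&\qquad\qquad-\tfrac{G^\alpha(\omega)}{2}\big(X^{\alpha\dagger}(\omega)X^\alpha(\omega)\otimes I+I\otimes\overline{X^{\alpha\dagger}(\omega)X^\alpha(\omega)}\big)\Big)
\end{align}
on $\bC^d\otimes\bC^d$. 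By the lemmas of this section $\cQ$ is HS-Hermitian, so $Q$ is Hermitian; since $\cL$ is similar to the Hermitian $\cK$ and has spectrum in $[-1,0]$ (because $\|\cL\|\le 1$ and $\cL(\sigma)=0$), the spectrum of $Q$ lies in $[0,1]$; its top eigenvalue is $1$ with eigenvector $|\sigma^{1/2}\>=(\sigma^{1/2}\otimes I)|\Omega\>$, this eigenvalue is simple exactly when $\sigma$ is the unique fixed point of $\cL$, and $1-\lambda_2(Q)$ equals the spectral gap $\gamma$ of $\cL$.

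Next I would build a block encoding of $Q$ with subnormalization exactly one, i.e.\ a reflection $S$ on $\cH\otimes\cH$ together with ancillas such that $Q=(\<0|_{\mathrm{anc}}\otimes I)\,S\,(|0\>_{\mathrm{anc}}\otimes I)$; this is where the hypotheses enter. An LCU over $\alpha$ with PREPARE amplitudes $\propto\sqrt{w_\alpha}$ reduces the task to each $\alpha$-term. For fixed $\alpha$, I would adjoin two energy registers and apply the ideal energy-estimation unitary to the two copies of $\cH$, so that the eigenprojectors $\Pi_k$ of $h$ are coherently labelled by $|\varepsilon_k\>$; in that labelled basis the reflection $S^\alpha$ block-encoding $X^\alpha=\sum_\omega X^\alpha(\omega)$ automatically resolves into the Bohr-frequency pieces $X^\alpha(\omega)=\sum_{(k,\ell)\in\cJ_\omega}\Pi_k X^\alpha\Pi_\ell$, because the energy labels shift by exactly $\omega=\varepsilon_k-\varepsilon_\ell$. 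Controlled rotations whose angles are read off from the energy registers then install the weights $\sqrt{G^\alpha(\pm\omega)}$ (legitimate since $G^\alpha(\omega)\le 1$), and restricting to the sector where the two recorded Bohr frequencies agree yields a block encoding of $\sum_\omega\sqrt{G^\alpha(\omega)G^\alpha(-\omega)}\,X^\alpha(\omega)\otimes\overline{X^\alpha(\omega)}$; the two anticommutator pieces (weight $G^\alpha(\omega)$) are produced the same way, acting on one copy with the other idle. Combining these with the identity through a final LCU/qubitization step and uncomputing the energy registers assembles $S$, which is efficient because the energy-estimation unitary, the reflections $S^\alpha$, and the arithmetic and controlled rotations are all efficient. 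This is precisely the Davies-generator construction of Sections~\ref{sec:Davies}--\ref{sec:Davies_approximate} applied to the canonical form, together with the general-to-Davies reduction of Section~\ref{sec:generalLindbladians}.

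Finally, with a subnormalization-one block encoding $S$ of $Q$ in hand, I would set $W(\cL)$ to be the standard qubitized walk operator (the product of the reflection about the ancilla $|0\>$ subspace and $S$) and invoke the textbook eigenphase/eigenvalue correspondence: on each invariant two-dimensional subspace the eigenphases of $W(\cL)$ are $\pm\arccos\lambda_j$, with $\lambda_j$ the eigenvalues of $Q$. Hence the eigenvalue $1$ gives eigenphase $0$ with eigenvector $|\sigma^{1/2}\>\otimes|0\>_{\mathrm{anc}}$ (unique when $\sigma$ is the unique fixed point of $\cL$), while every other eigenvalue satisfies $\lambda_j\le 1-\gamma$, so the nonzero eigenphases are bounded below by $\arccos(1-\gamma)\ge\sqrt{\gamma}$; that is, the phase gap of $W(\cL)$ is quadratically larger than the spectral gap of $\cL$. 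I expect the middle step to be the main obstacle: producing an \emph{exact} (subnormalization-one) block encoding of the vectorized discriminate, with the $G^\alpha(\omega)$ and $e^{-\omega/2}$ weights and the anticommutator normalization all correct, while keeping it efficiently implementable -- which is exactly what the reflection hypothesis on the $S^\alpha$ and the availability of ideal energy estimation are designed to make possible.
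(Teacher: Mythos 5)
Your high-level plan matches the paper's: vectorize the discriminate to $Q = I + \hat{\cK}$, build a subnormalization-one block encoding, form the walk $W(\cL) = R(2\Pi - I)$ for the corresponding isometry, and read off eigenphases via $\lambda\mapsto e^{\pm i\arccos\lambda}$ together with the appendix's gap bound. Your opening spectral analysis of $Q$ and your closing gap-amplification step are both correct and agree with the paper. The gap is precisely where you flag it --- the middle step --- and your sketch of it would not close.

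First, ``restricting to the sector where the two recorded Bohr frequencies agree'' and ``combining these with the identity through a final LCU'' are postselection plus an LCU sum; together they give a block encoding of $Q/c$ for some $c>1$, not of $Q$. That breaks the conclusion as stated: the top eigenvalue becomes $1/c<1$, so $|\sigma^{1/2}\>$ acquires a nonzero eigenphase $\pm\arccos(1/c)$ rather than $0$, and the quoted quadratic amplification no longer holds verbatim. The paper avoids this entirely: there is a \emph{single} frequency register, the controlled filter rotation $W$ installs $\sqrt{G(\omega)}$ once per branch, and the factor $\sqrt{G(\omega)G(-\omega)}$ is generated not by matching two registers but by the reflection $R$, which coherently applies the sign flip $F|\omega\>\mapsto|-\omega\>$ and a bit flip on the \texttt{add} qubit under control of the filter qubit; a $|+\>$ ancilla on \texttt{add} coherently superposes $T_0$ (kick on one system copy) with $T_1$ (kick on the other), and one checks directly that $T^\dagger R T = I + \hat{\cK}$ with no subnormalization --- the identity term coming from the filter-$0$ branch and the cross and anticommutator terms from the filter-$1$ branch. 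Second, the specific content of Section~\ref{sec:generalLindbladians} that proves \emph{this} theorem --- the part you cite but do not reproduce --- is the lift from $X^\alpha$ (not a reflection) to the block-encoding reflection $S^\alpha$ acting on system plus the $s$-qubit ancilla, with \emph{extended Bohr frequencies} $(c,\theta)$ for $c\in\{0,1,2\}$ labelling the $P_0/P_1$ blocks of $S^\alpha$, and extended filter functions $G(0,\theta)=G(\theta)$, $G(1,\theta)=G(2,\theta)=0$. The Davies computation then applies verbatim because $S(-\omega)=S^\dagger(\omega)$ and $\sum_\omega S^\dagger(\omega)S(\omega)=I$ hold for these extended macro jump operators precisely because $S^\alpha$ is a reflection on the enlarged space, and the unwanted $c=1,2$ contributions drop out because their filter weights vanish. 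Without this lift, your sketch has no mechanism for running the Davies argument on an $X^\alpha$ that merely satisfies $\|X^\alpha\|\le 1$ rather than being a reflection.
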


We provide a proof of the above theorem in Section~\ref{sec:generalLindbladians} by reducing the problem of quantizing detailed balanced Lindbladians to the special case of quantizing Davies generators having reflections as coupling operators.  

\subsection{Detailed balanced quantum channels}

We now briefly explain that our theoretical results and our quantum algorithm also apply to (discrete-time) quantum channels.

Let $\cT$ be a quantum channel in the Schr\"odinger picture. Using the Kraus representation, we can write it as
\begin{align}
    \cT(\rho) = \sum_i {A_i \rho A_i^\dagger}
\end{align}
where the Kraus operators $A_i$ satisfy the condition
\begin{align}\label{eq:Kraus}
    \sum_i A_i^\dagger A_i = I
\end{align}
The quantum channel $\cT^*$ in the Heisenberg picture is simply
\begin{align}
    \cT^*(O) = \sum_i A_i^\dagger O A_i
\end{align}

\begin{definition}[Detailed balanced quantum channel]
Let $\cT$ be a quantum channel. We call $\cT$ detailed balanced with respect to the state $\sigma$ if the quantum channel $\cT^*$ in the Heisenberg picture is hermitian with respect of the $\sigma$-GNS inner product.
\end{definition}

\begin{lemma}[Detailed balanced implies fixed point]
If $\cT$ is detailed balanced with respect to $\sigma$, then $\sigma$ is a fixed point.
\end{lemma}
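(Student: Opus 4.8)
The plan is to reuse the argument of Lemma~\ref{lem:DB_implies_fixed_point} almost verbatim, the only change being that the Heisenberg-picture channel $\cT^*$ is \emph{unital}, $\cT^*(I)=I$, rather than annihilating $I$ the way a Lindbladian does; this unitality is precisely the Kraus normalization $\sum_i A_i^\dagger A_i = I$ in (\ref{eq:Kraus}).

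Concretely, I would first record $\cT^*(I)=\sum_i A_i^\dagger I A_i = \sum_i A_i^\dagger A_i = I$. Then, for an arbitrary observable $A$, I would expand using that $\cT^*$ is hermitian with respect to the $\sigma$-GNS inner product $\<X,Y\>_1=\Tr(X^\dagger Y\sigma)$ (so that $\<I,Z\>_1=\Tr(Z\sigma)$) and that $\cT$ is the Hilbert--Schmidt adjoint of $\cT^*$:
\begin{align}
    \Tr(\sigma A)
    = \< I, A\>_1
    = \< \cT^*(I), A\>_1
    = \< I, \cT^*(A)\>_1
    = \Tr\big(\sigma\,\cT^*(A)\big)
    = \Tr\big(\cT(\sigma)\,A\big),
\end{align}
the last equality being immediate from the Kraus form and cyclicity of the trace, $\Tr\big(\sigma\sum_i A_i^\dagger A A_i\big)=\Tr\big(\sum_i A_i\sigma A_i^\dagger\,A\big)=\Tr(\cT(\sigma)A)$. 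Since $A$ is arbitrary, $\Tr\big((\cT(\sigma)-\sigma)A\big)=0$ for all $A$, hence $\cT(\sigma)=\sigma$.

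There is essentially no obstacle here; the only point to be careful about is the bookkeeping of the GNS inner product and the fact --- noted after (\ref{eq:inner_f_with_Omega}) --- that ``hermitian with respect to $\<\cdot,\cdot\>_1$'' means the two-sided identity $\<\cT^*X,Y\>_1=\<X,\cT^*Y\>_1$ for \emph{all} $X,Y$, which is what licenses the substitution $X=I$. Equivalently, and even more quickly, one can invoke (\ref{eq:hermitian}) with $\cM=\cT^*$ and $f\equiv 1$: detailed balance gives $\Omega_\sigma^1\circ\cT^*=\cT\circ\Omega_\sigma^1$, and since $\Omega_\sigma^1=R_\sigma$ with $R_\sigma(I)=\sigma$ while $\cT^*(I)=I$, evaluating both sides at $I$ yields $\cT(\sigma)=\sigma$ at once.
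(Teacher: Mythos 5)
Your proof is correct and follows essentially the same chain of equalities as the paper: insert $\cT^*(I)=I$, use GNS-hermiticity of $\cT^*$, then switch to the Hilbert--Schmidt picture to read off $\Tr(\cT(\sigma)A)=\Tr(\sigma A)$ for all $A$. The extra remarks (unitality of $\cT^*$, the $\Omega_\sigma^1=R_\sigma$ reformulation) are sound but add nothing beyond the paper's one-line argument.
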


\begin{proof}
We have
\begin{align}
    \Tr(\sigma A) = \<I,A\>_1 = \< \cT^*(I), A\>_1 = \< I, \cT^*(A) \>_1 = \Tr(\sigma \cT^*(A)) = \Tr(\cT(\sigma) A) 
\end{align}
for all $A$ so that $\cT(\sigma)=\sigma$ holds.
\end{proof}

\begin{remark}
We can associate a Lindbladian $\cL$ to the quantum channel $\cT$ by setting
\begin{align}
    \cL(\rho) = \sum_i A_i \rho A_i^\dagger - \frac{1}{2} \{ A_i^\dagger A_i, \rho \}
\end{align}
Note that $\cL = \cT - I$ because of condition (\ref{eq:Kraus}). It is now clear that $\cT$ is detailed balanced iff the corresponding Lindbladian $\cL$ is detailed balanced. The canonical form for the Lindbladians also carries over for the quantum channels.
\end{remark}

%%%

\section{Szegedy walk unitaries for Davies generators with ideal energy estimation}\label{sec:Davies}

%%%

Davies generators form an important class of detailed balanced Lindbladians. In this section, we show how to quantize Davies generators whose coupling operators are restricted to be \emph{reflections}. 

The section is organized as follows.
First, we introduce the Davies map $\cL$ for a general Hamiltonian $H$ and an inverse temperature $\beta$, which has the thermal Gibbs state $\sigma = e^{-\beta H}/\Tr\big( {e^{-\beta H} \big)}$
as fixed point. Second, we given an expression for the quantum discriminate in the vectorized representation.  Third, we prove how to efficiently obtain a block encoding of the vectorized quantum discriminate provided that we can perfectly resolve the energies of $H$. Fourth, we discuss how spectral gap amplification techniques applied to the vectorized quantum discriminate enable us to obtain the desired walk unitary for $W(\cL)$ so that the purification $|\sigma^{1/2}\>=(\sigma^{1/2}\otimes I)|\Omega\>$ is its unique eigenvector with eigenphase $0$ and its eigenphase gap is quadratically larger than the spectral gap of $\cL$.

%%%

\subsection{Davies generators}

%%%

A particular class of Liouvillians \cite{lindblad76,Breuer}, which describes the thermalization of a quantum mechanical subject to thermal noise, is known as Davies generators \cite{Davies,Davies2}. This class of Liouvillians  describes the dissipative dynamics resulting as the weak (or singular) coupling limit from a joined Hamiltonian evolution of a system coupled to a large heat bath. The weak coupling limit permits to consider only the reduced dissipative dynamics, which gives rise to a Markovian semi-group generated by the aforementioned Davies Liovillian. This generator retains the information of the bath temperature $\beta$ and converges to the Gibbs distribution of the system, constructed from the system Hamiltonian $H$. Here, we assume that such a generator is already given in the canonical form. We assume that the system is coupled to a bath through the hermitian operators $S_\alpha$ that drive the dissipative dynamics. If a sufficiently large set of such operators is given, then the Gibbs state is the unique fixed point of the map. We discuss the properties of this generator directly, and are not concerned with the actual derivation. 

We write the system Hamiltonian $H$ as
\begin{align}
	H &= \sum_{k=1}^m \varepsilon_k \Pi_k
\end{align}
where $\varepsilon_k\in [0,1]$ denote the $m$ different eigenvalues and $\Pi_k$ are the projectors onto the corresponding eigenspaces. We choose the Schr{\"o}dinger picture as a convention so that the Davies generator $\cL$ describes the evolution of states.  

The canonical form of the Davies generator $\cL$ is given by   
\begin{align}\label{eq:Davies_canonical}
    \cL(\rho) 
    =
    \sum_\alpha w_\alpha \cL^{\alpha}(\rho)
    =
    \sum_\alpha w_\alpha \sum_\omega
    G^\alpha(\omega) \left(
        {S^\alpha}(\omega) \rho {S^\alpha}^\dag(\omega) - 
        \frac{1}{2} \left\{ {S^\alpha}^\dag(\omega)S^\alpha(\omega),\rho \right\} 
    \right)
\end{align}
The index $\alpha$ enumerates the coupling operators $S^\alpha$ to the environment. 
Note that the weights $w_\alpha$ could be absorbed into the coupling operators $S^\alpha$, but it is more convenient for our purposes to write the Davies generator $\cL$ as a convex sum of Lindbladians $\cL^\alpha$.  For a fixed $\alpha$, the variable $\omega$ runs over all different Bohr frequencies of the system Hamiltonian $H$, that is, the energy differences of the form $\omega = \varepsilon_k - \varepsilon_\ell$. The functions $G^\alpha(\omega)$ correspond to the Fourier transform of the two point correlation functions of the environment, and are bounded \cite{Breuer,Spohn}. These functions depend on the bath operators as well as the thermal state of the bath and encode the equilibrium temperature. We may assume them to be simply given by the Metropolis filter rule or a similar filter as defined by Glauber dynamics. The Lindblad operators $S^\alpha(\omega)$ are given by the Fourier components of the coupling operators $S^\alpha$ which evolve according to the system Hamiltonian 
\begin{align}
	e^{iH t} S^\alpha e^{-i H t } &= \sum_\omega S^\alpha(\omega) e^{i \omega t}.
\end{align}
The Lindblad operators $S^\alpha(\omega)$  induce transitions between the eigenvectors of $H$ with energy $E$ to eigenvectors of $H$ with energy $E+\omega$, and hence act as quantum jump operators, which transfer energy $\omega$ from the system to the bath. 
A direct evaluation shows that the jump operators $S^\alpha(\omega)$ are of  the form
\begin{align}\label{Davi-lind}
    S^\alpha(\omega) &= \sum_{(k, \ell)\in\cJ_\omega} \Pi_k S^\alpha \Pi_\ell
\end{align}
The jump operators $S^\alpha(\omega)$ and filter functions $G^\alpha(\omega)$ satisfy the properties in Definition~\ref{def:Lindbladian_canonical}. Therefore, the Davies generator $\cL$ is detailed balanced with respect to $\sigma$.

We already know that the Gibbs state $\sigma$ is a fixed point of the Davies generator $\cL$.  This follows from Lemma~\ref{lem:DB_implies_fixed_point} since $\cL$ is detailed balanced with respect to $\sigma$.
The next lemma provides a sufficient condition for the Gibbs state $\sigma$ to be the \emph{unique} fixed point. This result was proved in \cite{Spohn}.

\begin{lemma}[Uniqueness of fixed point]
If the coupling algebra\footnote{It is important that a small number of generators will suffice to generate the entire matrix algebra and, thus, guarantee that the fixed point $\sigma$ is unique.} $\calS = \big\langle S^\alpha\big\rangle$ generated by the coupling operators $S^\alpha$ is such that the commutant of $\calS \cup \{H\}$ is only the identity, then the Gibbs state $\sigma$ is the unique fixed point of the Davies generator $\cL$.
\end{lemma}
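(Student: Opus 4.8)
The plan is to move to the Heisenberg picture, show that the fixed‑point space $\cN=\{X:\cL^*(X)=0\}$ consists only of scalar multiples of the identity, and transfer this back to $\cL$. The transfer is immediate from the detailed balance identity \eqref{eq:L_db_implication} with $f\equiv 1$: in the GNS case $\Omega_\sigma^f=R_\sigma$, so $R_\sigma\circ\cL^*=\cL\circ R_\sigma$, and since $R_\sigma$ is invertible (as $\sigma$ has full rank) this gives $R_\sigma(\ker\cL^*)=\ker\cL$; using that $\cL^*$ is the Hilbert--Schmidt adjoint of $\cL$, so $\dim\ker\cL=\dim\ker\cL^*$, we get $\ker\cL=R_\sigma(\bC I)=\bC\,\sigma$, i.e. $\sigma$ is the unique stationary state. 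So everything reduces to proving $\cN=\bC I$.

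To prove $\cN=\bC I$, I would first record a dissipation identity. Since $\cL^*$ is hermiticity preserving, $\cN$ is closed under $\dagger$, so it suffices to treat self‑adjoint $X$ with $\cL^*(X)=0$. Writing the Davies generator in canonical form \eqref{eq:Davies_canonical} with effective jump operators $\sqrt{w_\alpha G^\alpha(\omega)}\,S^\alpha(\omega)$, a direct algebraic computation gives, for every $X$,
\begin{align}
    \cL^*(X^\dagger X)-\cL^*(X^\dagger)X-X^\dagger\cL^*(X)
    \;=\;\sum_\alpha w_\alpha\sum_\omega G^\alpha(\omega)\,[X,S^\alpha(\omega)]^\dagger[X,S^\alpha(\omega)]\;\ge\;0 .
\end{align}
Call the left‑hand side $D(X)$. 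If $X\in\cN$ then $X^\dagger\in\cN$ as well, so the last two terms vanish and $D(X)=\cL^*(X^\dagger X)$; pairing with the faithful fixed point $\sigma$ and using $\Tr(\sigma\,\cL^*(Y))=\Tr(\cL(\sigma)\,Y)=0$ (Lemma~\ref{lem:DB_implies_fixed_point}) yields $\Tr(\sigma\,D(X))=0$. Since $\sigma$ has full rank and $D(X)\ge 0$, this forces $D(X)=0$, hence $[X,S^\alpha(\omega)]=0$ for all $\alpha$ and all Bohr frequencies $\omega$ (terms with $G^\alpha(\omega)=0$ do not occur in $\cL$ at all, and for the standard filter functions such as the Metropolis or Glauber rule all $G^\alpha(\omega)$ are strictly positive). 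Summing over $\omega$ gives $[X,S^\alpha]=0$ for every $\alpha$, i.e. $X\in\calS'$.

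Next I would grade by Bohr frequency: $X=\sum_\nu X_\nu$ with $X_\nu=\sum_{(k,\ell)\in\cJ_\nu}\Pi_k X\Pi_\ell$, so $[H,X_\nu]=\nu X_\nu$. One checks from \eqref{eq:Davies_canonical} and \eqref{Davi-lind} that $\cL^*$ preserves each grade (conjugation by $S^\alpha(\omega)$ shifts the grade by $-\omega$ and then by $+\omega$, while $S^\alpha(\omega)^\dagger S^\alpha(\omega)$ has grade $0$), so each $X_\nu$ is again a self‑adjoint fixed point, and the previous paragraph gives $[X_\nu,S^\alpha]=0$ for all $\alpha$. The grade‑$0$ part also commutes with $H$, hence $X_0\in(\calS\cup\{H\})'=\bC I$. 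The crux is to kill the off‑diagonal grades: fix $\nu\neq 0$. Here hermiticity of the coupling operators is used — from $[X_\nu,S^\alpha]=0$ and $(S^\alpha)^\dagger=S^\alpha$ we also get $[X_\nu^\dagger,S^\alpha]=0$, so $X_\nu^\dagger X_\nu$ commutes with every $S^\alpha$; it has grade $0$, so it commutes with $H$, whence $X_\nu^\dagger X_\nu\in(\calS\cup\{H\})'=\bC I$, say $X_\nu^\dagger X_\nu=\lambda I$ with $\lambda\ge 0$. If $\lambda>0$ then $\lambda^{-1/2}X_\nu$ is an isometry, hence (in finite dimension) a unitary $U$ with $U^\dagger H U=H+\nu I$, making the spectrum of $H$ invariant under the translation $\varepsilon\mapsto\varepsilon+\nu$ — impossible for a finite set when $\nu\neq 0$. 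So $\lambda=0$, i.e. $X_\nu=0$, and therefore $X=X_0\in\bC I$. This gives $\cN=\bC I$, and by the transfer above $\sigma$ is the unique fixed point of $\cL$, recovering the result of \cite{Spohn}.

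I expect the main obstacle to be precisely the last step: extracting from the single operator equation $\cL^*(X)=0$ not merely that $X$ commutes with the individual Fourier components $S^\alpha(\omega)$ (which generate a potentially far smaller algebra than $\langle\calS\cup\{H\}\rangle$) but that $X$ is pushed into the commutant of $\calS\cup\{H\}$; this is where the grading of $\cL^*$, the hermiticity of the $S^\alpha$, and the finiteness of $\mathrm{spec}(H)$ must all be combined. An alternative is to invoke the general structure theorem for the fixed‑point algebra of a quantum dynamical semigroup with a faithful stationary state and then add the short spectral‑translation argument above to incorporate $H$, which is in essence how \cite{Spohn} proceeds.
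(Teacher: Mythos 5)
The paper does not actually prove this lemma; it only cites Spohn--Lebowitz. Your reconstruction is essentially the standard argument used there (dissipation function plus faithfulness of the stationary state to show fixed points lie in the commutant of the jump operators, then an extra step to incorporate $H$), and it is correct. Two small remarks: (1) In the grading step you write that ``each $X_\nu$ is again a self-adjoint fixed point,'' but for $\nu\neq 0$ the component $X_\nu$ of a self-adjoint $X$ is \emph{not} self-adjoint (rather $X_\nu^\dagger=X_{-\nu}$); this does not affect the argument, since your dissipation step only uses that both $X_\nu$ and $X_\nu^\dagger$ lie in $\ker\cL^*$, which grade-preservation of $\cL^*$ gives you. (2) Extracting $[X,S^\alpha(\omega)]=0$ from $D(X)=0$ for \emph{every} Bohr frequency $\omega$, and then summing to get $[X,S^\alpha]=0$, requires $w_\alpha G^\alpha(\omega)>0$ for all $\omega$; you flag this, and it is consistent with the paper's stated convention (Metropolis or Glauber filters, both strictly positive), but strictly speaking the hypothesis as stated in the lemma (commutant of $\calS\cup\{H\}$ trivial, with $\calS=\langle S^\alpha\rangle$) is only equivalent to ergodicity when no filter value vanishes; otherwise one should take $\calS$ generated by the $S^\alpha(\omega)$ with $G^\alpha(\omega)>0$. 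With those clarifications the proof is complete and matches the route of the cited reference.
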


%%%

We are now ready to state our main theorem.

\begin{theorem}[Quantization of Davies generators -- ideal energy estimation]\label{thm:quant_Davies_ideal}
Let $\cL$ be a Davies generator in canonical form as in (\ref{eq:Davies_canonical}) for the Hamiltonian $H$ and inverse temperature $\beta$. Assume that all coupling operators $S^\alpha$ are reflections and all filter functions $G^\alpha$ satisfy $G^\alpha(\omega)\le 1$. 

Then, there exists a corresponding walk unitary $W(\cL)$ such that
its unique eigenvector with eigenphase $0$ is the purification
$|\sigma^{1/2}\> = (\sigma^{1/2} \otimes I) |\Omega\>$ of the Gibbs state $\sigma$
and its phase gap is quadratically larger than the spectral gap of the Davies generator $\cL$. It can be obtained by standard methods for spectral gap amplification applied to the block encoding of the vectorized quantum discriminate matrix.  The block encoding of can implemented exactly provided that we can determine the energies of the Hamiltonian $H$. 
\end{theorem}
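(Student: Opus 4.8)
\medskip
\noindent\emph{Proof strategy.} The plan is to pass to the vectorized picture, identify the quantum discriminate with an explicit positive semidefinite matrix $Q$ on $\cH\otimes\cH$, build an \emph{exact} block encoding of $Q$ out of ideal energy estimation and the reflections $S^\alpha$, and then amplify the gap by standard qubitization.

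First, vectorize: the map $A\mapsto |A\> = (A\otimes I)|\Omega\>$ is an isometry from $B(\cH)$ with the HS inner product onto $\cH\otimes\cH$, under which a superoperator $\Phi(A)=\sum_i B_i A C_i$ becomes the matrix $\sum_i B_i\otimes C_i^T$. By Lemma~\ref{lem:quantum_discriminate} the discriminate $\cQ$ is independent of $f$ and HS-hermitian, so its vectorization $Q$ is a hermitian matrix; writing the lemma out and using $S^{\alpha\dagger}(\omega)=S^\alpha(-\omega)$ together with $\sqrt{G^\alpha(\omega)G^\alpha(-\omega)}=G^\alpha(\omega)e^{-\omega/2}$,
\begin{align}
    Q = I + \sum_\alpha w_\alpha\Big[\sum_\omega G^\alpha(\omega)e^{-\omega/2}\, S^\alpha(\omega)\otimes\overline{S^\alpha(\omega)} \;-\; \tfrac12\big(N^\alpha\otimes I + I\otimes\overline{N^\alpha}\big)\Big],
\end{align}
with $N^\alpha=\sum_\omega G^\alpha(\omega)S^{\alpha\dagger}(\omega)S^\alpha(\omega)$. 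Since the $S^\alpha$ are reflections, $\sum_\omega S^{\alpha\dagger}(\omega)S^\alpha(\omega)=\sum_\ell\Pi_\ell(S^\alpha)^2\Pi_\ell=I$ (all cross terms drop because the eigenspaces of $H$ are orthogonal), so $0\le N^\alpha\le I$ by the hypothesis $G^\alpha(\omega)\le 1$. Writing $Q=I+K$ with $K$ the vectorization of the hermitian operator $\cK$, which is similar to $\cL$ and hence has the same spectrum, contained in $[-1,0]$ with eigenvalue $0$ attained on $\sigma$, we get that $Q$ is positive semidefinite with $\|Q\|=1$; its top eigenvector (eigenvalue $1$) is $|\sigma^{1/2}\>=(\sigma^{1/2}\otimes I)|\Omega\>$, the vectorization of the fixed point $\tau=\sigma^{1/2}$ of $\cQ$, and it is the unique such eigenvector precisely when $\sigma$ is the unique fixed point of $\cL$. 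Finally, vectorization and similarity preserve spectra, so the spectral gaps of $\cL$, $\cQ$, and $Q$ all coincide.

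Second, build an exact block encoding of $Q$. Ideal energy estimation provides a unitary $U_H$ with $U_H(|\psi\>\otimes|0\>)=\sum_k(\Pi_k|\psi\>)\otimes|\varepsilon_k\>$. For fixed $\alpha$ one realizes $\sum_\omega S^\alpha(\omega)\otimes\overline{S^\alpha(\omega)}$ exactly: on each of the two copies, use $U_H$ to record the energy label, apply the reflection $S^\alpha$ (directly implementable since it is a reflection), use $U_H$ once more to record the new energy label, and project onto the flag subspace where the two resulting Bohr-frequency labels $\omega=\varepsilon_{k'}-\varepsilon_k$ agree; a rotation on an extra ancilla controlled on $\omega$ by the angle $\arccos\!\big(G^\alpha(\omega)e^{-\omega/2}\big)$ --- well defined because $G^\alpha(\omega)e^{-\omega/2}=\sqrt{G^\alpha(\omega)G^\alpha(-\omega)}\le 1$ --- inserts the filter weight. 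The ``decay'' term $\tfrac12(N^\alpha\otimes I+I\otimes\overline{N^\alpha})$ is diagonal in the $H$-eigenbasis on each copy, so it is block-encoded by the same energy-labelling plus an ancilla rotation encoding the eigenvalues of $N^\alpha$. Assembling the jump part, the decay part, and the leading $I$ for fixed $\alpha$, and then combining over $\alpha$ with the weights $w_\alpha$, yields a block encoding of $Q$ whose subnormalization can be taken to be exactly $1$ --- here one uses $\sum_\alpha w_\alpha=1$, $\|S^\alpha\|=1$, $0\le N^\alpha\le I$, $G^\alpha(\omega)\le 1$, and that $Q$ has norm $1$. Every step is exact, which is precisely where ideal rather than approximate energy estimation enters.

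Third, amplify: qubitizing the $(1,m,0)$-block encoding of the hermitian matrix $Q$ produces a walk unitary $W(\cL)$ whose eigenphases are $\pm\arccos(\lambda)$ as $\lambda$ ranges over the spectrum of $Q$. Thus $|\sigma^{1/2}\>\otimes|0^m\>$ has eigenphase $0$ (and is the unique such eigenvector when $\sigma$ is the unique fixed point), while the second eigenvalue $\lambda_2=1-\Delta$ of $Q$, with $\Delta$ the spectral gap of $\cL$, gives eigenphases of modulus $\arccos(1-\Delta)=\Theta(\sqrt{\Delta})$, so the phase gap of $W(\cL)$ is quadratically larger than the spectral gap of $\cL$. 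I expect the main obstacle to be the normalization in the second step: a naive linear combination of the jump and decay pieces costs a factor $2$ and would push the top eigenvalue of the block-encoded matrix strictly below $1$, destroying the eigenphase-$0$ property; one therefore has to exploit $\sum_\omega S^{\alpha\dagger}(\omega)S^\alpha(\omega)=I$, the relation $G^\alpha(\omega)=e^\omega G^\alpha(-\omega)$, and the reflection property of the $S^\alpha$ to present $Q$ --- indeed each of its $\alpha$-summands, the discriminate of $\cL^\alpha$ --- directly as an overlap $A^\dagger B$ of two isometries built from $U_H$ and $S^\alpha$, as in Szegedy's original walk construction, so that the singular value $1$ of $Q$ becomes the eigenphase $0$ of $W(\cL)$ automatically.
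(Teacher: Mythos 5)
Your high-level strategy --- vectorize, identify the HS-hermitian discriminate $\hat{\cQ}=I+\hat{\cK}$, observe that its spectrum lies in $[0,1]$ with top eigenvector $|\sigma^{1/2}\>$, block-encode $\hat{\cQ}$ exactly using ideal energy estimation and the reflections $S^\alpha$, and then qubitize --- is exactly the paper's. Your spectral bookkeeping is also correct: the identity $\sum_\omega S^{\alpha\dagger}(\omega)S^\alpha(\omega)=\sum_\ell\Pi_\ell S^{\alpha\dagger}S^\alpha\Pi_\ell=I$ for a reflection $S^\alpha$ is precisely what the paper uses (and is the reason the hypothesis ``$S^\alpha$ is a reflection'' is needed), and $\sqrt{G^\alpha(\omega)G^\alpha(-\omega)}=G^\alpha(\omega)e^{-\omega/2}$ is property~(\ref{eq:sqrtGs}). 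However, the crux of the theorem is the exact subnormalization-$1$ block encoding of $\hat{\cQ}$, and this is where your proposal has a genuine gap: you correctly identify that a naive linear-combination-of-unitaries of the jump piece, the decay piece, and the identity would carry a subnormalization $>1$ and kill the eigenphase-$0$ property, and you correctly state that one must instead exhibit $\hat{\cQ}$ as $A^\dagger B$ for two isometries $A,B$ --- but you stop there without producing the decomposition or showing it is realizable, which is the nontrivial content of the theorem.

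The paper's resolution is the concrete construction $\hat{\cQ}=T^\dagger R\,T$ (so $A=T$ and $B=RT$ in your notation). The isometry $T$ is built as the coherent sum $T=\tfrac{1}{\sqrt2}\big(T_0\otimes|0\>+T_1\otimes|1\>\big)$, where $T_0$ applies ideal phase estimation $\Phi$, the kick $S$ on the second system copy, $\Phi^\dagger$ to record the Bohr frequency $\omega$ in a \texttt{freq} register, and a controlled rotation $\hat G(\omega)=\begin{pmatrix}\sqrt{1-G(\omega)}&-\sqrt{G(\omega)}\\\sqrt{G(\omega)}&\sqrt{1-G(\omega)}\end{pmatrix}$ on a single-qubit \texttt{filter} register; $T_1$ does the same with $\bar\Phi$ and $\bar S$ on the first copy. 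The reflection $R$ acts trivially when \texttt{filter} is $|0\>$, and when \texttt{filter} is $|1\>$ it applies the frequency sign flip $F:|\omega\>\mapsto|-\omega\>$ and a bit flip on the \texttt{add} qubit. Expanding $T^\dagger RT$, the $|0\>\<0|_{\texttt{filter}}$ sector reproduces the $I-\tfrac{1}{2}\sum_\omega G(\omega)\big(S^\dagger(\omega)S(\omega)\otimes I+I\otimes\bar S^\dagger(\omega)\bar S(\omega)\big)$ term using $\sum_\omega S^\dagger(\omega)S(\omega)=I$, while the $|1\>\<1|_{\texttt{filter}}$ sector --- by virtue of the $F$-flip and the \texttt{add}-flip forcing the $T_0$ branch to overlap only with the $T_1$ branch --- produces the cross term $\tfrac{1}{2}\sum_\omega\sqrt{G(\omega)G(-\omega)}\big(S(\omega)\otimes\bar S(\omega)+S^\dagger(\omega)\otimes\bar S^\dagger(\omega)\big)$, with the $\tfrac12$'s absorbing the $\tfrac{1}{\sqrt2}$'s. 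Notice that your suggestion to apply $S^\alpha$ on \emph{both} copies and then post-select agreement of Bohr labels would not yield an isometry-times-reflection form (post-selection breaks it) and would not give subnormalization $1$; the paper instead superposes ``apply $S$ to copy~2'' with ``apply $\bar S$ to copy~1'', and the cross term in the overlap is what supplies $S(\omega)\otimes\bar S(\omega)$. For multiple $\alpha$ the paper takes $T=\sum_\alpha T^\alpha\otimes|\alpha\>\otimes(\cos\theta_\alpha|0\>+\sin\theta_\alpha|1\>)$ with $\cos 2\theta_\alpha=w_\alpha$ and $R\mapsto R\otimes I\otimes Z$, which reproduces the convex combination. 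With $\hat{\cQ}=T^\dagger RT$ in hand, the walk unitary $W(\cL)=R(2\Pi-I)$ with $\Pi=TT^\dagger$ has eigenphase $0$ on $|\sigma^{1/2}\>$ and phase gap $\arccos(1-\Delta)\ge\sqrt{2\Delta}$, by the appendix. So your route is the paper's route, but the construction of $T$ and $R$ --- which you yourself flag as ``the main obstacle'' --- must actually be supplied for the proof to close.
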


We prove this theorem in the next three subsections.

%%%

\subsection{Quantum discriminate in the vectorized representation}\label{subsec:vectorized}

%%%

For the sake of simplicity, we consider the case that there is only one coupling operator $S$. The general case of multiple coupling operators can be handled easily.

In Section~\ref{sec:Lindbladians}, we showed that a detailed balanced Lindbladian $\cL$ is similar to an operator $\cK$ that has the following form
\begin{align}
    \cK(A) &= \sum_\omega
    \sqrt{G(\omega) \, G(-\omega)} S(\omega) A S^\dagger(\omega) 
    - \frac{G(\omega)}{2} \big\{
    S^{\dagger}(\omega) S(\omega),  A
    \big\}    
\end{align}
Let us recall the key properties of $\cK$:
\begin{itemize}
    \item [(a)] $\cK(\sqrt{\sigma})=0$, 
    \item [(b)] $\cK$ is hermitian with respect to the trace HS inner product, and 
    \item [(c)] $\cK$ does not depend explicitly on the fixed point $\sigma$ (although the similarity transformation was defined using $\sigma$).
\end{itemize}

We will now use the vectorization operation to map the superopertor $\cK$ acting on $\bC^{d\times d}$ to the matrix $\hat{\cK}$ acting $\bC^d\otimes\bC^d$. Recall that the vectorization operation is defined by  
\begin{align}
    \mathrm{vec}(|i\>\<j|) = |i\> \otimes |j\>
\end{align}
and satisfies the useful identity
\begin{align}
    \mathrm{vec}(ABC) = (A\otimes C^T)\mathrm{vec}(B)
\end{align}
for all matrices $A$, $B$, and $C$. The vectorized representation $\hat{\cK}$ of $\cK$ is the unique matrix so that 
\begin{align}
    \mathrm{vec}\big(\cK(A)\big) &= \hat{\cK} \mathrm{vec}(A)
\end{align}
Using the above identity we obtain
\begin{align}\label{eq:hatK}
    \hat{\cK} = \sum_\omega \sqrt{G(\omega)G(-\omega)} S(\omega) \otimes \bar{S}(\omega) - \frac{G(\omega)}{2} \left(S^\dagger(\omega)S(\omega) \otimes I 
    + I \otimes 
    \bar{S}^\dagger(\omega)\bar{S}(\omega)\right)
\end{align}
The properties (a) and (b) manifest themselves in the vectorized representation as follows:
\begin{itemize}
    \item [(a)] The fixed point $\sqrt{\sigma}$ of $\cK$ becomes the fixed point $|\sqrt{\sigma}\>=(\sqrt{\sigma} \otimes I)|\Omega\>$ of $\hat{\cK}$, where $|\Omega\>$ denotes the unnormalized maximally entangled state. Note that $|\sqrt{\sigma}\>$ is a purification of $\sigma$.
    \item [(b)] Hermiticity of $\cK$ with respect to the HS trace inner product on $\bC^{d\times d}$ becomes hermiticity of $\hat{\cK}$ with respect to the standard inner product on $\bC^d\otimes\bC^d$.
\end{itemize}

The \emph{vectorized quantum discriminate} is simply given by $\hat{\cQ}=I+\hat{\cK}$, where $\hat{\cK}$ is as in (\ref{eq:hatK}).  To simplify the notation, we will use $\cQ$ to also denote the quantum discriminate in the vectorization representation.  

%%%

\subsection{Block encoding of quantum discriminate}

%%%

We construct a block encoding of the vectorized quantum discriminate $\cQ$ assuming ideal energy estimation. We assume as before that there is only one coupling operator $S$ so we may drop the index $\alpha$ when writing the jump operators $S(\omega)$. We define an isometry $T$ and a reflection $R$ to embed the quantum discriminate $\cQ = I + \hat{\cK}$ as follows
\begin{align}
    \cQ &= T^\dagger \; R \; T
\end{align}

Let us explain why we seek to embed $\cQ = I + \hat{\cK}$ instead of $\hat{\cK}$. The ``fixed point'' of  $\hat{\cK}$ is $|\sqrt{\sigma}\>$ with $\hat{\cK} |\sqrt{\sigma}\> = 0$. However, for the Szegedy walk unitary we want to quantize a discriminate matrix satisfying ${\cal Q}|\sqrt{\sigma}\> = |\sqrt{\sigma}\>$. This is accomplished by working with ${\cal Q} = I + \hat{\cK}$. We will also see that $\cQ$ can be embedded in a very natural and elegant way.  

To construct the isometry $T$ and the reflection $R$, we need to define the following building blocks.
\begin{itemize}
    \item {\bf Ideal energy estimation unitaries:}
    \begin{align}
        \Phi       
        &=
        \sum_p \sum_k \Pi_k \otimes |p - \varepsilon_k\>\<p| \\
        \bar{\Phi} 
        &= 
        \sum_p \sum_k \bar{\Pi}_k \otimes |p - \varepsilon_k\>\<p|
    \end{align}
     
     Here $\Phi$ denotes phase estimation of the unitary $e^{-i H}$ where $H=\sum_k \varepsilon_k \Pi_k$, while $\bar{\Phi}$ denotes phase estimation of the unitary $e^{-i \bar{H}}$ where $\bar{H}=\sum_k \varepsilon_k \bar{\Pi}_k$. The variable $p$ runs over all possible values of the energy pointer register. 
     
     Note that both $\Phi$ and $\bar{\Phi}$ constitute idealized energy estimation procedures making it possible to resolve the energies perfectly. 
     
     \item {\bf The coupling and kick operator:}
     
     \noindent
     We consider a coupling and kick operator that serves both as observable and as unitary that can be implemented. We require that it is a reflection so that $S = S^\dagger$ as well as $S^\dagger S = S S^\dagger = I$. We also assume that we can implement $\bar{S}$ (by complex conjugating the elementary gates used in the decomposition of $S$).
     
     \item {\bf The controlled filter rotation:}
     \begin{align}
        W &= \sum_{\omega}  |\omega \>\< \omega| \otimes \hat{G}(\omega)
     \end{align}
     
     where
     \begin{align}
     \hat{G}(w) &= \left(
        \begin{array}{cc}
            \sqrt{1-G(\omega)} & -\sqrt{  G(\omega)} \\
            \sqrt{  G(\omega)} &  \sqrt{1-G(\omega)}
        \end{array}
    \right).
     \end{align}
     
     \item{\bf The sign flip reflection:}
     
     \noindent
     We define a reflection on a single register that maps computational basis 
     states from $|\omega\>$ to $|-\omega\>$ by setting
     \begin{align}
        F &= \sum_{\omega} |-\omega\>\<\omega|
     \end{align}
     This map is a reflection since it squares to the 
     identity.
\end{itemize}

Let us now implement the isometry $T$. It is composed of the two sub-isometries $T_0$ and $T_1$ that are defined in the circuits below. 

\begin{figure}[ht]
\begin{center}
\begin{quantikz}
\lstick{\texttt{filter}} & \qw \gategroup[4, steps=4,style={dashed, rounded corners}, background]{$T_0$} & \qw     & \qw              &  \gate{W}     & \qw & \qw & & \qw \gategroup[4, steps=4,style={dashed, rounded corners}, background]{$T_1$}             & \qw            & \qw                    &  \gate{W}     & \qw \\
\lstick{\texttt{freq}}   & \gate{\Phi}   & \qw       & \gate{\Phi^\dag} &  \ctrl{-1}    & \qw & \qw & & \gate{\bar{\Phi}} & \qw            & \gate{\bar{\Phi}^\dag} &  \ctrl{-1}    & \qw \\
\lstick{\texttt{sys1}}   & \qw           & \qw       & \qw              &  \qw          & \qw & \qw & & \ctrl{-1}         & \gate{\bar{S}} & \ctrl{-1}              &  \qw          & \qw \\
\lstick{\texttt{sys2}}   & \ctrl{-2}     & \gate{S}  & \ctrl{-2}        &  \qw          & \qw & \qw & & \qw               & \qw            & \qw                    &  \qw          & \qw 
\end{quantikz}
\end{center}
\caption{Circuits for the subisometries $T_0$ and $T_1$.}
\label{fig:circuits_T0_T1}
\end{figure}
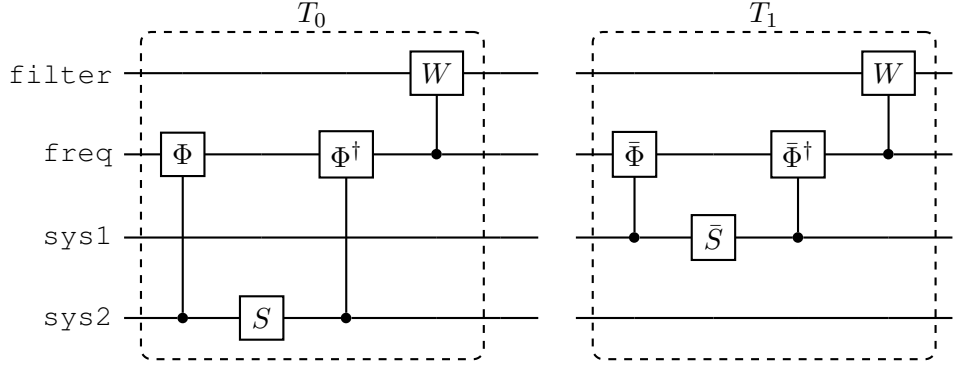

\noindent
There are four registers. The registers \texttt{sys1} and \texttt{sys2} correspond to a ``doubled'' physical register that can encode purifications of quantum states. The register \texttt{freq} encodes the Bohr frequencies of the Hamiltonian. It is prepared in $|0^r\>$. The register \texttt{filter} is a single qubit register on which we preform the controlled rotation $\hat{G}(\omega)$. It is prepared in $|0\>$. The sub-isometry $T_0$ has the kick operator $S$ act on the register \texttt{sys2}, whereas the sub-isometry $T_1$ has the complex-conjugate kick operator $\bar{S}$ act on the register \texttt{sys1}. 

The full isometry $T$ is obtained by coherently adding $T_0$ and $T_1$. To this end, we introducing a single qubit register \texttt{add} that is prepared in $|+\>$ and apply controlled versions of these two respective isometries as show in the circuit below.

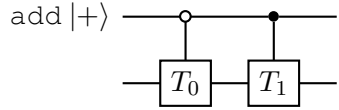
\begin{figure}[ht]
\begin{center}
\begin{quantikz}
\lstick{\texttt{add} $|+\>$} & \octrl{1}  & \ctrl{1}   & \qw \\
                      & \gate{T_0} & \gate{T_1} & \qw 
\end{quantikz}
\end{center}
\caption{Circuit for the full isometry $T$}
\end{figure}

\noindent
The full isometry $T$ is given by
\begin{align}
    T = \frac{1}{\sqrt{2}}\left(T_0 \otimes |0\> + T_1 \otimes |1\> \right)
\end{align}

Before we analyze $T$ in detail, let us first understand the action of the ideal energy estimation operations and kick operators. It suffices to focus on $T_0$ as $T_1$ behaves analogously. We only consider the action on the two ``active'' registers \texttt{sys2} and \texttt{freq} for notional simplicity.

The ideal energy estimation acts as
\begin{align}
\Phi^\dagger \big(S \otimes I \big) \Phi 
    &=  
    \sum_p \sum_k \; \sum_q \sum_\ell \; \Pi_k \, S \, \Pi_\ell \; \otimes \;
    |q\>\<q - \varepsilon_k | p - \varepsilon_\ell\>\<p| \\
    &=  
    \sum_{p} \; \sum_{k, \ell} \; \Pi_k \, S \, \Pi_\ell \; \otimes \;
    | \varepsilon_k -\varepsilon_\ell + p\>\<p| \\
    &=  
    \sum_{p} \sum_{\omega} S(\omega) \otimes \;| \omega  + p\>\<p|
\end{align}
where we used the definition of the ideal phase estimation unitary $\Phi$ as wells as the definition of the jump operators $S(\omega) = \sum_{(k,\ell)\in\cJ_\omega}  \Pi_k \, S \, \Pi_\ell$.

According to the circuits these isometries are defined to be
\begin{align}
    T_0 
    &= 
    W \, \Phi_1^\dagger \, \big(S \otimes I \big) \, \Phi_1 \, 
    \big( I \otimes I \otimes |0\>^r \otimes |0\> \big) \\ 
    T_1 
    &= 
    W \, \bar{\Phi}_2^\dagger \, \big(I \otimes \bar{S} \big) \, \bar{\Phi}_2 \, 
    \big( I \otimes I \otimes |0\>^r \otimes |0\> \big) 
\end{align}
Using the expression for the ideal energy estimation, we obtain
\begin{align}
    T_0 
    &= 
    \sum_{\omega}  S(\omega) \otimes I \otimes |\omega\> \otimes \left(\sqrt{1-G(\omega)}|0\> + \sqrt{G(\omega)} |1\>\right) \\ 
    T_1 
    &= 
    \sum_{\omega} I \otimes \bar{S}(\omega) \otimes |\omega\> \otimes \left(\sqrt{1-G(\omega)}|0\> + \sqrt{G(\omega)} |1\>\right)
\end{align}
for the isometries $T_0$ and $T_1$. Hence, we obtain
\begin{align}
    T 
    &= 
    \frac{1}{\sqrt{2}}\sum_{\omega}  S(\omega) \otimes I \otimes |\omega\> \otimes 
    \left(\sqrt{1-G(\omega)} |0\> + \sqrt{G(\omega)} |1\>\right) \otimes |0\> \\
    &+ 
    \frac{1}{\sqrt{2}} \sum_{\omega}  I \otimes \bar{S}(\omega) \otimes |\omega\> \otimes 
    \left(\sqrt{1-G(\omega)} |0\> + \sqrt{G(\omega)} |1\>\right)\otimes |1\> 
\end{align}
for the full isometry $T_0$.

Before we define the reflection $R$, we recall that the registers in the order from left to right are: \texttt{sys2}, \texttt{sys1}, \texttt{freq}, \texttt{filter}, and \texttt{add} in the above expression for the full isometry $T$.

The reflection operator $R$ is defined by the circuit

\gategroup[2, steps=2, style={dashed, rounded corners}, background]{$R$}   

\begin{figure}[ht]
\begin{center}
\begin{quantikz}
\lstick{\texttt{add}}    & \gate{X}  & \qw      & \qw \\
\lstick{\texttt{filter}} & \ctrl{-1} & \ctrl{1} & \qw \\
\lstick{\texttt{freq}}   & \qw       & \gate{F} & \qw \\
\lstick{\texttt{sys1}}   & \qw       & \qw      & \qw \\[0.25cm]
\lstick{\texttt{sys2}}   & \qw       & \qw      & \qw \\
\end{quantikz}
\end{center}
\caption{Circuit for reflection $R$.}
\end{figure}
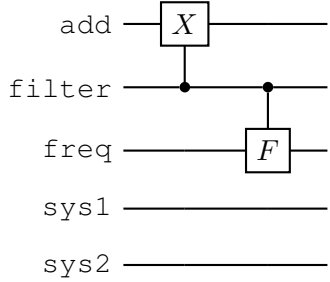

\noindent
$R$ applies the sign flip $\omega\leftrightarrow -\omega$ in \texttt{freq} and the bit flip $0\leftrightarrow 1$ in \texttt{add} iff  the control qubit \texttt{filter} is in the state $|1\>$. The expression for $R$ is
\begin{align}
   R
   &=
   I\otimes I \otimes \; I  \otimes |0\>\<0| \otimes I \\
   &+
   I\otimes I \otimes F \otimes |1\>\<1| \otimes X  
\end{align}

The remainder of this section is devoted to proving that the desired vectorized quantum discriminant $\cQ$ is given by the block encoding
\be
\cQ = T^\dagger R T
\ee

We begin by computing the product $R T$. To this end, it is convenient to rearrange the terms of $T$ as follows
\begin{align}    
    T
    &=
    % |00>
    \frac{1}{\sqrt{2}} \sum_{\omega} \sqrt{1 - G(\omega)} \; S(\omega) \otimes I \otimes |\omega\> \otimes 
    |0\>  \otimes |0\> \\
    &+
    % |01>
    \frac{1}{\sqrt{2}} \sum_{\omega} \sqrt{1 - G(\omega)} \; I \otimes \bar{S}(\omega) \otimes |\omega\> \otimes 
    |0\> \otimes |1\> \\
    &+ 
    % |10>
    \frac{1}{\sqrt{2}} \sum_{\omega} \sqrt{G(\omega)} \; S(\omega) \otimes I \otimes |\omega\> \otimes 
    |1\> \otimes |0\> \\
    &+ 
    % |11>
    \frac{1}{\sqrt{2}} \sum_{\omega}  \sqrt{G(\omega)} \; I \otimes \bar{S}(\omega) \otimes |\omega\> \otimes 
    |1\> \otimes |1\> 
\end{align}

Using the above expression for $T$, it is seen that the product $RT$ is given by
\begin{eqnarray}    
    R T
    &=&
    % |00>
    \frac{1}{\sqrt{2}} \sum_{\omega} \sqrt{1-G(\omega)} \; S(\omega) \otimes I \otimes |\omega\> \otimes 
    |0\>  \otimes |0\> \\
    &+&
    % |01>
    \frac{1}{\sqrt{2}} \sum_{\omega} \sqrt{1-G(\omega)} \; I \otimes \bar{S}(\omega) \otimes |\omega\> \otimes 
    |0\> \otimes |1\> \\
    &+& 
    % |10> -> |11>
    \frac{1}{\sqrt{2}} \sum_{\omega} \sqrt{G(\omega)} \; S(\omega) \otimes I \otimes |-\omega\> \otimes 
    |1\> \otimes |1\> \\
    &+& 
    % |11> -> |10>
    \frac{1}{\sqrt{2}} \sum_{\omega}  \sqrt{G(\omega)} \; I \otimes \bar{S}(\omega) \otimes |-\omega\> \otimes 
    |1\> \otimes |0\> 
\end{eqnarray}

Observe the sign flip $F|\omega\> = |-\omega\>$ and the bit flip $|0\>\leftrightarrow |1\>$ in the final two lines. 
Using the above expression for $RT$, it is now seen that the quantum discriminate ${\cal Q}=T^\dagger (R T)$ is given by 
\begin{align}
    {\cal Q} 
    &= 
    \frac{1}{2}\sum_{\omega} \; \big( 1 - G(\omega) \big) \; S^\dagger(\omega) S(\omega) \otimes I  \\
    &+ 
    \frac{1}{2}\sum_{\omega} \; \big( 1 - G(\omega) \big) \; I \otimes \bar{S}^\dagger(\omega)\bar{S}(\omega)  \\
    &+ 
    \frac{1}{2}\sum_{\omega, \nu} \; \sqrt{G(\omega) \, G(\nu)} \; S(\omega) \otimes \bar{S}^\dagger(\nu) \;
    \delta_{-\omega, \nu} \\
    &+ 
    \frac{1}{2}\sum_{\omega, \nu} \; \sqrt{G(\omega) \, G(\nu)} \; S^\dagger(\nu) \otimes \bar{S}(\omega) \;
    \delta_{-\omega, \nu} 
\end{align}

To further simplify the above expression for ${\cal Q}$, we use the following two elementary properties of the jump operators $S(\omega)$: (i) $S^\dagger(-\omega) = S(\omega)$ and (ii) $\sum_{\omega} S^\dagger(\omega)S(\omega) = I$.  Note that is essential for (ii) that $S$ is a reflection. For the sake of completeness, we provide a proof for (ii).
\begin{align}
    \sum_\omega S^{\dagger}(\omega) S(\omega) 
    &=
    \sum_\omega 
    \left( \sum_{(k,\ell) \in \cJ_\omega} P_\ell S^{\dagger} P_k \right) 
    \left( \sum_{(m, n) \in \cJ_\omega} P_m S P_n \right) \\
    &=
    \sum_\omega 
    \sum_{(k,\ell) \in \cJ_\omega} \sum_{n : (k, n) \in \cJ_\omega} P_\ell S^{\dagger} P_k S P_n \\
    &=
    \sum_\omega 
    \sum_{(k,\ell) \in \cJ_\omega} P_\ell S^{\dagger} P_k S P_\ell \\
    &=
    \sum_{k,\ell} P_\ell S^{\dagger} P_k S P_\ell = S^{\dagger} S = I
\end{align}

We now obtain the final desired expression for the vectorized quantum discriminate ${\cal Q}$  
\begin{align}
    {\cal Q} 
    &= 
    I \otimes I + 
    \sum_{\omega}\; \sqrt{G(\omega)G(-\omega)} \; S(\omega) \otimes \bar{S}(\omega) 
    - \frac{G(\omega)}{2} \left( 
        S^\dagger(\omega)S(\omega)\otimes I 
        + I \otimes \bar{S}^\dagger(\omega) \bar{S}(\omega) 
    \right)
\end{align}

Using the expression of $\hat{\cal K}$ in (\ref{eq:hatK}), we see that the definition of the vectorized quantum discriminate $\cQ=I+\hat{\cK}$ coincides with the constructed block encoding ${\cal Q} = T^\dagger R T$. 

%%%

\subsection{Gap amplification of quantum discriminate}

In the previous two subsections, we considered the case of a single coupling operator $S$. Let us now show how to handle the general case of multiple coupling operators $S^\alpha$. Let $w_\alpha$ denote the corresponding weights.

The results in the previous two subsections show how to construct a reflection $T^\alpha$ for each $\alpha$ such that $\cQ^\alpha=T^{\alpha\dagger} R T^\alpha$.
For each $\alpha$, we set $\theta_\alpha=\arccos(w_\alpha)/2$ so that $\cos^2\theta_\alpha-\sin^2\theta_\alpha=\cos(2\theta)=w_\alpha$. We define the isometry
\begin{align}
    T = \sum_\alpha T^\alpha \otimes |\alpha\> \otimes 
    \big( \cos\theta_\alpha |0\> + \sin\theta_\alpha |1\> \big)
\end{align}
Let us use the abbreviation $R$ to denote the reflection $R\otimes I \otimes Z$, where $I$ acts on the register holding $|\alpha\>$ and $Z$ acts on the qubit containing the states $\cos\theta_\alpha |0\> + \sin\theta_\alpha |1\>$.

We now see that
\begin{align}
    \cQ = T^\dagger R T
\end{align}
The desired walk unitary is given by 
\begin{align}
    \cW(\cL) = R (2\Pi - I)
\end{align}
where $\Pi$ denotes the projector onto the image of the isometry $T$.  This is a very well-known result in quantum algorithms.  For the sake of completeness, we have include a proof of this result in the appendix.

We point out that the quantum singular value transformation (QSV) \cite{gilyen19quant_sing_val} enables us to directly construct the projector $|\sqrt{\sigma}\>\<\sqrt{\sigma}|$ onto the eigenvector of $\cQ$ with eigenvalue $1$, that is, the purification $|\sqrt{\sigma}\>$, by invoking the block encoding $O(1/\sqrt{\Delta})$ many times, where $\Delta$ is the spectral gap of $\cQ$ (which is equal to the spectral gap of the Lindbladian). Being able to implement this projector (or the corresponding reflection) is what is really needed for quantum algorithms.  Before the advent of QSV, the usual approach to realize this projector was to perform quantum phase estimation of the walk unitary to distinguish the eigenphase $0$ from the other eigenphases.  

%%%

\section{Szegedy walk unitaries for Davies generators with approximate energy estimation}\label{sec:Davies_approximate}

In the preceeding section, we assumed ideal energy estimation to explain our method for quantizing Lindbladians. We now remove this unphysical assumption and work with the method \cite{rall2021} for energy estimation. To state the performance of this energy estimation method, we need to first introduce the following rounding promise.

\begin{definition}[Rounding promise]
Let $r$ be a positive integer and $\alpha\in(0,1)$. A hermitian matrix $H$ with spectral decomposition
\begin{align}
    H = \sum_k \varepsilon_k \Pi_k
\end{align}
is said to satisfy an $(r, \alpha)$-\emph{rounding promise} if for all its eigenvalues $0\le \varepsilon_k < 1$ and for all $x\in\{0,\ldots,2^r\}$:
\begin{align}
    \left| \varepsilon_k - \frac{x}{2^r}\right| \ge \frac{\alpha}{2^{r+1}}
\end{align}
\end{definition}

The rounding promise effectively disallows the eigenvalues $\varepsilon_k$ of $H$ from certain sub-intervals of $[0, 1)$. The definition above is chosen such that the total length of these disallowed subintervals is $\alpha$, regardless of the value of $r$. We acknowledge that guaranteeing a rounding-promise would demand a tremendous amount of knowledge about energies of Hamiltonians. It is not expected that many Hamiltonians in practice provably satisfy such a strong promise. But it seems likely that the energy estimation method still works sufficiently well even if the rounding promise is violated so it can be used a subroutine for quantizing Davies generators. As shown below, the rounding promise enables us to precisely analyze the method for quantizing Davies generators. 

%%%

\begin{theorem}[Approximate energy estimation]\label{thm:energy_estimation}
Assume that we are given access to a block encoding of a hermitian matrix $H=\sum_k \varepsilon_k \Pi_k$ satisfying an $(r, \alpha)$-rounding promise. Then, we can realize a quantum map $\Phi$ that is $\delta$-close in $\diamond$-norm to the unitary energy estimation map $U$ that acts on states\footnote{Note that action of $U$ on states $|\psi\> \otimes |z\>$ with $z\neq 0^n$ can be arbitrary.} $|\psi\> \otimes |0^r\>$ as
\begin{align}
    U \big( |\psi\> \otimes |0^n\> \big) =
    \sum_{k} \Pi_k |\psi\> \otimes |\lfloor \varepsilon_k 2^r \rfloor\>
\end{align}
The query complexity with respect to the block encoding of $H$ is bounded by
\begin{align}\label{eq:query_complexity}
    O\Big( \alpha^{-1}\log(\delta^{-1})
    \big(2^r + \log(\alpha^{-1}\big)\Big)
\end{align}
\end{theorem}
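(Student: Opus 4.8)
The plan is to reduce Theorem~\ref{thm:energy_estimation} to the phase-estimation routine of \cite{rall2021} by exhibiting the claimed $U$ as a ``rounded'' version of ordinary phase estimation of $e^{-2\pi i H}$, and then quoting the complexity bound from that reference. The construction proceeds in three stages. First, from the block encoding of $H$ (with $\|H\|<1$) one uses Hamiltonian simulation via the quantum singular value transformation \cite{gilyen19quant_sing_val} to obtain a block encoding, or approximate unitary implementation, of the evolution $e^{-2\pi i H}$ with controllable error; the query cost of simulating $e^{-2\pi i t H}$ to precision $\epsilon$ is $O(t + \log(1/\epsilon))$. Second, one feeds this into the phase-estimation-with-rounding-promise primitive of \cite{rall2021}: the $(r,\alpha)$-rounding promise guarantees that every eigenvalue $\varepsilon_k$ sits at distance at least $\alpha/2^{r+1}$ from every grid point $x/2^r$, so a phase estimation routine run to additive accuracy slightly better than $\alpha/2^{r+2}$ with success probability $1-\delta$ will, on the eigenvector supporting $\Pi_k$, deterministically output the integer $\lfloor \varepsilon_k 2^r\rfloor$ — there is no ambiguity about which bin $\varepsilon_k$ falls into. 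Third, one checks coherence: applied to a superposition $|\psi\>\otimes|0^r\>$, linearity of the (approximate) unitary gives exactly $\sum_k \Pi_k|\psi\>\otimes|\lfloor\varepsilon_k 2^r\rfloor\>$ up to the $\diamond$-norm error $\delta$, and the behaviour on inputs $|\psi\>\otimes|z\>$ with $z\neq 0^r$ is irrelevant as noted in the footnote.

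For the complexity count in \eqref{eq:query_complexity}, I would track the two sources of cost. The rounding-promise phase estimation of \cite{rall2021} must resolve phases to within $O(\alpha/2^r)$, which by the standard phase-estimation-by-repetition-and-majority-vote analysis (or the coherent ``consistent phase estimation'' of \cite{rall2021}) requires evolving for total time $O(\alpha^{-1}2^r \log(\delta^{-1}))$ to push the failure probability below $\delta$; the extra additive $\log(\alpha^{-1})$ inside the parenthesis in \eqref{eq:query_complexity} comes from the $\log(1/\epsilon)$ overhead of the Hamiltonian-simulation subroutine when the simulation precision $\epsilon$ is taken polynomially small in $\alpha$ and $\delta$ (so that the simulation error does not spoil the binning). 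Multiplying through, the query complexity with respect to the block encoding of $H$ is $O\!\big(\alpha^{-1}\log(\delta^{-1})(2^r+\log(\alpha^{-1}))\big)$, matching the claim. The outer $\alpha^{-1}\log(\delta^{-1})$ factor counts how many times the (length-$O(2^r+\log\alpha^{-1})$) evolution must be invoked, and each such invocation makes that many queries to the block encoding.

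The main obstacle, and the only place real care is needed, is the passage from ``phase estimation succeeds with high probability'' to ``the resulting channel is $\delta$-close in $\diamond$-norm to a \emph{unitary} $U$.'' Ordinary iterative phase estimation is not coherent — it produces a mixture over estimate outcomes — so one cannot simply declare the output unitary. The fix is to use the coherent/consistent version from \cite{rall2021}, in which all the measurement randomness is deferred into an ancilla register in a fixed garbage state, conditioned on which the action on the system-plus-pointer registers is the clean map above; the rounding promise is exactly what makes this garbage register (and hence the induced pointer value) a deterministic function of $k$ rather than a genuine superposition that would entangle with and decohere the output. One then bounds the $\diamond$-distance between this dilated map and the ideal $U$ by the simulation error plus the (exponentially suppressed) tail probability that the phase estimate lands outside the guaranteed window, and chooses the internal parameters so that this total is $\le\delta$. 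I would also remark that the restriction $0\le\varepsilon_k<1$ is what lets us identify phases in $[0,1)$ with eigenvalues without wraparound, and that rescaling a general Hamiltonian into this window is a harmless preprocessing step absorbed into the block-encoding normalisation.
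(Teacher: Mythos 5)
The paper does not actually provide its own proof of this theorem: immediately after the statement it writes that ``the above result is proved in [Rall 2021, Corollary~16. Improved Energy Estimation]'' and moves on. So the correct ``proof'' you should be comparing against is the cited corollary, not an argument in this manuscript.

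Your reconstruction is in the right spirit but takes a genuinely different route from the one the citation actually uses. You propose to first Hamiltonian-simulate $e^{-2\pi i H}$ via QSVT and then run a consistent/coherent phase-estimation subroutine on the resulting unitary. Rall's ``Improved Energy Estimation,'' by contrast, works directly on the block encoding of $H$ with QSVT-built threshold/step filters, deliberately \emph{bypassing} the Hamiltonian-simulation layer --- that bypass is precisely what makes it the ``improved'' version and what produces the clean bound $O\big(\alpha^{-1}\log(\delta^{-1})(2^r + \log(\alpha^{-1}))\big)$. Your two-stage pipeline would plausibly give the same leading term $\alpha^{-1}\log(\delta^{-1})\,2^r$, but the lower-order bookkeeping you offer is more of a just-so story than a derivation: you attribute the additive $\log(\alpha^{-1})$ to the $\log(1/\epsilon)$ overhead of Hamiltonian simulation and then assert, without tracking it, that everything multiplies out to the claimed formula. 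If you actually sum the simulation costs $O(2^j + \log(1/\epsilon'))$ over the $j=0,\dots,m-1$ controlled powers in a phase-estimation run and then multiply by the $O(\log(\delta^{-1}))$ repetitions, the subleading term you get does not obviously reduce to $\alpha^{-1}\log(\delta^{-1})\log(\alpha^{-1})$, and stray factors of $r$ or extra $\log(1/\delta)$ factors threaten to appear. None of this is fatal to the theorem (which is true by the citation), but it means your proposal, read as a self-contained proof, has an unclosed gap in the complexity accounting and does not reproduce the method the paper is actually relying on.

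On the positive side, you correctly identify the two conceptual crux points: (i) the $(r,\alpha)$-rounding promise is exactly what makes $\lfloor \varepsilon_k 2^r\rfloor$ a well-defined, \emph{deterministic} function of $k$ once the estimate is accurate to within $\alpha/2^{r+1}$, and (ii) ordinary phase estimation is not coherent, so one must dilate it so that the garbage register is a fixed state (independent of $k$) in order to be $\delta$-close in diamond norm to a single \emph{unitary} $U$ rather than merely to a measurement. These are precisely the issues Rall's construction is engineered to handle, and you explain them well; the gap is in how you get there and in the precise query count.
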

The above result is proved in \cite[Corollary 16. Improved Energy Estimation]{rall2021}. We emphasize that without any rounding promise \emph{any} method for energy estimation will necessarily produce states close to 
\begin{align}
    \sum_k \Pi_k \otimes \sum_x c_{k,x} |x\>
\end{align}
where the magnitude of $c_k,x$ depends on the distance between $\varepsilon_k 2^r$ and $x$.  In words, instead of obtaining a single estimate for each $k$, we would obtain a superposition of estimates. This is unavoidable as discussed in \cite[1 Preliminaries]{rall2021}.  Roughly speaking, the reason is a follows: there are only a finite number of possible pointer values for the energies, whereas the energies can be arbitrary values in the interval $[0, 1)$. These superpositions greatly complicate the analysis of quantum algorithms (and are often disregarded in the literature).

We also note that is not possible to improve the scaling with $2^r$. It is known that highly precise energy estimation would make it possible to solve PSPACE-hard problems \cite{pspace}.

%%%

Using the above method for energy estimation, we can approximately determine the Bohr frequencies and also approximately realize the jump operators.

\begin{lemma}[Bohr frequency estimation and approximate jump operators]
Given a reflection $S$ and access to a block encoding of a hermitian matrix $H=\sum_k \varepsilon_k \Pi_k$ satisfying an $(r, \alpha)$-\emph{rounding promise} we can implement a quantum map $\Psi$ that is $\delta$-close in $\diamond$-norm to the unitary $V$ that acts on states $|\psi\>\otimes |0^r\> \otimes |0^r\>$ as
\begin{align}\label{eq:bohr_estimation}
    V (|\psi\> \otimes |0^r\> \otimes |0\>^r) &=
    \sum_{k,\ell} \Pi_k S \Pi_\ell |\psi\> \otimes 
    |\lfloor \varepsilon_k 2^r \rfloor - 
    \lfloor \varepsilon_\ell 2^r \rfloor \> \otimes |0^r\>
\end{align}
The query complexity with respect to the block encoding of $H$ is three times the query complexity of the method for energy estimation in (\ref{eq:query_complexity}).
\end{lemma}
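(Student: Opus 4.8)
The plan is to build $V$ by sandwiching the reflection $S$ between three applications of the approximate energy estimation map $\Phi$ from Theorem~\ref{thm:energy_estimation}, exactly mirroring the ideal-case identity $\Phi^\dagger (S\otimes I)\Phi = \sum_p\sum_\omega S(\omega)\otimes|\omega+p\>\<p|$ derived in Section~\ref{sec:Davies}, but now with the rounded energies $\lfloor\varepsilon_k 2^r\rfloor$ in place of the exact $\varepsilon_k$. Concretely, I would use two frequency registers, both initialized to $|0^r\>$. First apply the (ideal) energy estimation $U$ of Theorem~\ref{thm:energy_estimation} to the system together with the first frequency register, producing $\sum_k \Pi_k|\psi\>\otimes|\lfloor\varepsilon_k 2^r\rfloor\>$; then apply $S$ to the system; then apply $U^\dagger$ (the inverse energy estimation) to uncompute, and finally a subtraction circuit moving the difference into the second register. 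Tracking the $\Pi_k S\Pi_\ell$ terms shows the resulting action is $\sum_{k,\ell}\Pi_k S\Pi_\ell|\psi\>\otimes|\lfloor\varepsilon_k2^r\rfloor-\lfloor\varepsilon_\ell2^r\rfloor\>\otimes|0^r\>$, which is precisely $V$. This identity holds exactly for the ideal $U$; the analogous computation is the three-step composition $U^\dagger$, $S$, $U$ with a classical subtraction.

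Next I would pass from the ideal $U$ to the implementable $\Phi$, which by Theorem~\ref{thm:energy_estimation} is $\delta'$-close to $U$ in diamond norm at query cost $O(\alpha^{-1}\log(\delta'^{-1})(2^r+\log(\alpha^{-1})))$. Replacing the two uses of $U$ (one forward, one inverse) by $\Phi$ and $\Phi^\dagger$, and using that $S$ and the subtraction circuit are exact unitaries of diamond norm $1$, the triangle inequality for diamond norm under composition gives that the resulting channel $\Psi$ is within $2\delta'$ of $V$; choosing $\delta'=\delta/2$ (or $\delta/3$ to be safe about the subtraction, though that is exact) yields the claimed $\delta$-closeness. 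The query complexity is then three times that of a single energy estimation call (two genuine calls to $\Phi$/$\Phi^\dagger$ plus, if one wishes to symmetrize, a third; in any case at most three), giving exactly the stated bound.

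The one genuinely delicate point — and the step I expect to be the main obstacle — is bookkeeping the rounding. In the ideal case one uses $\varepsilon_k-\varepsilon_\ell=\omega$ is a genuine Bohr frequency, so the intermediate cancellations are clean. With rounded energies, the quantity recorded is $\lfloor\varepsilon_k2^r\rfloor-\lfloor\varepsilon_\ell2^r\rfloor$, which need not equal $\lfloor(\varepsilon_k-\varepsilon_\ell)2^r\rfloor$; moreover distinct Bohr frequencies could in principle round to the same integer, or a single $\omega$ could pick up contributions labeled by different rounded differences. The rounding promise is what controls this: it guarantees each $\varepsilon_k$ sits away from the grid points $x/2^r$ by at least $\alpha/2^{r+1}$, so the floor is unambiguous and stable under the $\delta$-approximation, and the approximate energy estimation genuinely outputs the single value $\lfloor\varepsilon_k2^r\rfloor$ rather than a superposition. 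I would therefore state the lemma's conclusion in terms of the rounded differences $\lfloor\varepsilon_k2^r\rfloor-\lfloor\varepsilon_\ell2^r\rfloor$ (as written), note that these play the role of ``approximate Bohr frequencies,'' and defer the downstream analysis of how well $\sum_{k,\ell}\Pi_k S\Pi_\ell\otimes|\lfloor\varepsilon_k2^r\rfloor-\lfloor\varepsilon_\ell2^r\rfloor\>$ approximates the true jump operators $S(\omega)$ to the later construction of the walk unitary; within this lemma the only claims are the exact form of $V$ and the $\delta$-closeness and query-complexity bounds, both of which follow from Theorem~\ref{thm:energy_estimation} and composition of diamond-norm errors.
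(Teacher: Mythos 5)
Your general plan (sandwich $S$ between energy-estimation unitaries and then subtract, with diamond-norm errors composing subadditively) is the right shape, but the specific circuit you describe does not work, and the sticking point is exactly the footnote in Theorem~\ref{thm:energy_estimation}: the unitary $U$ is only specified on inputs of the form $|\psi\>\otimes|0^r\>$, and its action on $|\psi\>\otimes|z\>$ for $z\neq 0^r$ ``can be arbitrary.'' In the ideal case of Section~\ref{sec:Davies}, the two-call identity $\Phi^\dagger(S\otimes I)\Phi=\sum_p\sum_\omega S(\omega)\otimes|\omega+p\>\<p|$ goes through precisely because $\Phi=\sum_p\sum_k\Pi_k\otimes|p-\varepsilon_k\>\<p|$ is additive on \emph{all} register values $p$. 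The rounded-energy $U$ from \cite{rall2021} has no such additive structure. In your circuit you apply $U$, then $S$, then $U^\dagger$ to the same register: after $S$ the state is $\sum_{k,\ell}\Pi_k S\Pi_\ell|\psi\>\otimes|\lfloor\varepsilon_\ell 2^r\rfloor\>$, and for $k\neq\ell$ this is not in the image of the specified part of $U$, so $U^\dagger$ need not send it anywhere useful. (Your ordering is also internally inconsistent: if $U^\dagger$ uncomputes the register back to $|0^r\>$ before the subtraction, there is nothing to subtract; if the subtraction comes first, you still hit the unspecified-input problem.)

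The fix — and the reason the query complexity is exactly three times a single energy estimation, not ``two, or three if one wishes to symmetrize'' — is to use a fresh second register $B$ and never apply $U$ or $U^\dagger$ to a register that is off the specified input: apply $U$ to compute $\lfloor\varepsilon_\ell 2^r\rfloor$ into $A$; apply $S$; apply $U$ again to compute $\lfloor\varepsilon_k 2^r\rfloor$ into $B$ (legitimate because $B$ is still $|0^r\>$); subtract; then apply $U^\dagger$ to $B$, which \emph{is} in the image of $U$ on $|0^r\>$, to reset it. That is two applications of $U$ and one of $U^\dagger$, hence three queries. Your error analysis (replace each $U$ by a $\delta/3$-close channel, note $S$ and the subtraction are exact, compose in diamond norm) is then correct, and the $\log(\delta^{-1})$ factor in the query bound absorbs the constant $3$. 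Your final remark about deferring the rounding bookkeeping is fine — the paper indeed handles it in the subsequent theorem by reinterpreting $V$ as ideal Bohr-frequency estimation for the approximate Hamiltonian $\tilde H$ — but the lemma's proof has to first get the circuit right, and the two-call sandwich does not.
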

\begin{proof}
The unitary $V$ can be implemented using the unitary $U$ in Theorem~\ref{thm:energy_estimation} twice and $U^\dagger$ once and a unitary that performs integer subtraction. The circuit is composed of the following sub-circuits:
\begin{itemize}
    \item apply $U$ to compute $\lfloor \varepsilon_k 2^r \rfloor$ into the energy register $A$
    \item apply the reflection $S$
    \item apply $U$ to compute $\lfloor \varepsilon_\ell 2^r \rfloor$ into the energy register $B$
    \item apply a unitary that subtracts the value stored in $B$ from the value stored $A$
    \item apply $U^\dagger$ to uncompute $\lfloor \varepsilon_\ell 2^r\rfloor$ in $B$
\end{itemize}
\end{proof}

%%%

\begin{theorem}[Quantizing Davies generators with approximate energy estimation]
Given access to a block encoding of a hermitian matrix $H=\sum_k \varepsilon_k \Pi_k$ satisfying an $(r, \alpha)$-\emph{rounding promise}, we can realize a quantum map that is $\delta$-close in $\diamond$-norm to a block encoding of the quantum discriminate matrix $\tilde{\cQ}$ corresponding to the Davies generator $\tilde{\cL}$ for the approximate Hamiltonian 
\begin{align}
    \tilde{H} &= \sum_k \frac{\lfloor \varepsilon_k 2^r\rfloor}{2^r} \, \Pi_k
\end{align}
The query complexity with respect to the block encoding of $H$ is bounded by
\begin{align}\label{eq:query_complexityin_thm}
    O\Big( \alpha^{-1}\log(\delta^{-1})
    \big(2^r + \log(\alpha^{-1}\big)\Big)
\end{align}
\end{theorem}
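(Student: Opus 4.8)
The plan is to re-run the block-encoding construction of Section~\ref{sec:Davies} essentially verbatim, but with the ideal energy-estimation unitaries $\Phi,\bar\Phi$ replaced by the approximate Bohr-frequency estimator of the preceding lemma, and then to propagate the $\diamond$-norm error through the resulting constant-depth composition. The first observation is structural: $\tilde H=\sum_k\tfrac{\lfloor\varepsilon_k 2^r\rfloor}{2^r}\Pi_k$ has the same eigenprojectors as $H$ except that those $\Pi_k$ whose energies round to a common dyadic value get merged; consequently $S$ is still a reflection relative to $\tilde H$, the jump operators of the associated Davies generator $\tilde\cL$ are $\tilde S(\omega)=\sum_{(k,\ell)\in\tilde\cJ_\omega}\Pi_k S\Pi_\ell$ with $\tilde\cJ_\omega=\{(k,\ell):\lfloor\varepsilon_k 2^r\rfloor-\lfloor\varepsilon_\ell 2^r\rfloor=2^r\omega\}$ indexing the finitely many dyadic Bohr frequencies of $\tilde H$, and $\sum_\omega\tilde S^\dagger(\omega)\tilde S(\omega)=I$ still holds by the same computation as for property (ii) in Section~\ref{sec:Davies}. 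Hence $\tilde\cL$ is detailed balanced with respect to $\tilde\sigma=e^{-\beta\tilde H}/\Tr(e^{-\beta\tilde H})$, and by Lemma~\ref{lem:quantum_discriminate} its vectorized quantum discriminate $\tilde\cQ=I+\hat\cK$ is given by (\ref{eq:hatK}) with $S(\omega)$ replaced by $\tilde S(\omega)$ throughout.

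Next, I would build the isometry $\tilde T$ exactly as in Figure~\ref{fig:circuits_T0_T1} and the subsequent weighting circuit, but with the ideal block $\Phi^\dagger(S\otimes I)\Phi$ of $T_0$ replaced by the unitary $V$ of the preceding lemma acting on \texttt{sys2} and \texttt{freq}, and the corresponding block of $T_1$ replaced by its complex conjugate $\bar V$ acting on \texttt{sys1} and \texttt{freq} (obtained by conjugating elementary gates, equivalently by calling a block encoding of $\bar H$, which costs no extra $H$-queries). The point is that $V$ writes into \texttt{freq} precisely the integer $\lfloor\varepsilon_k 2^r\rfloor-\lfloor\varepsilon_\ell 2^r\rfloor$ attached to the block $\Pi_k S\Pi_\ell$, that is, exactly the Bohr frequency of $\tilde H$ carried by that block, while returning its scratch energy register to $|0^r\>$; the controlled filter rotation $W$ and the reflection $R$ then act on this clean dyadic frequency register (now an $O(r)$-qubit register), their non-query gates being implementable from the assumed efficiently computable filter $G$. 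With these substitutions the identity $\tilde T^\dagger R\,\tilde T=\tilde\cQ$ follows by the same algebra as in Section~\ref{sec:Davies}, with every sum over $\omega$ ranging over the Bohr frequencies of $\tilde H$; the general case of several coupling operators $S^\alpha$ is incorporated by the $\theta_\alpha$-weighting of Section~\ref{sec:Davies} without change.

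For the error and complexity analysis, a block encoding of $\tilde\cQ$ is $V_{\tilde T}^\dagger R\,V_{\tilde T}$ for a unitary dilation $V_{\tilde T}$ of the isometry $\tilde T$, and this circuit invokes the Bohr-frequency estimator $V$, its conjugate $\bar V$, and their adjoints a total of $O(1)$ times (one $V$ and one $\bar V$ inside $V_{\tilde T}$, one of each inside $V_{\tilde T}^\dagger$). Replacing each by the channel $\Psi$ (resp.\ $\bar\Psi$) that is $\delta'$-close to it in $\diamond$-norm, and using subadditivity of $\diamond$-norm error under channel composition together with $\diamond$-contractivity of conjugation by the remaining unitaries, the realized quantum map is $O(\delta')$-close to the exact block encoding; taking $\delta'=\Theta(\delta)$ gives $\delta$-closeness. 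Each invocation of $\Psi$ costs, by the preceding lemma, three times the energy-estimation cost (\ref{eq:query_complexity}) run at accuracy $\delta'$, i.e.\ $O(\alpha^{-1}\log(\delta'^{-1})(2^r+\log(\alpha^{-1})))$ queries to the block encoding of $H$; multiplying by the $O(1)$ number of invocations and substituting $\delta'=\Theta(\delta)$ yields the claimed bound (\ref{eq:query_complexityin_thm}).

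The main obstacle is the error bookkeeping rather than any new algebra: one must check that the only mechanism by which a \emph{superposition} of energy estimates could reappear in \texttt{freq} is exactly the one ruled out by the $(r,\alpha)$-rounding promise (this is what makes $\Psi$ genuinely $\delta'$-close to a \emph{unitary} $V$, and not merely to a map producing such superpositions), and that the uncomputation of the scratch energy register inside $V$ is what leaves \texttt{freq} in a definite dyadic Bohr frequency, so that $W$, $F$, and $R$ are applied consistently on both branches of the \texttt{add}-controlled circuit. Once these points are secured, identifying $\tilde T^\dagger R\,\tilde T$ with the quantum discriminate $\tilde\cQ$ of Lemma~\ref{lem:quantum_discriminate} is verbatim the ideal calculation of Section~\ref{sec:Davies}, and the query bound is a routine composition estimate.
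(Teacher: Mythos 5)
Your proof takes the same approach as the paper: reinterpret the approximate Bohr-frequency estimator $V$ as an \emph{ideal} Bohr-frequency estimator for the rounded Hamiltonian $\tilde H$, re-run the Section~\ref{sec:Davies} construction verbatim for $\tilde H$ to obtain $\tilde T^\dagger R\,\tilde T = \tilde\cQ$, and then propagate the $\diamond$-norm error through the $O(1)$ calls to the approximate channel $\Psi$. You flesh out the error bookkeeping and the structure of $\tilde H$'s (possibly merged) eigenprojectors in more detail than the paper's terse proof, but the central observation and the reduction to the ideal case are identical.
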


%%%

\begin{proof}
The following observation is simple yet it greatly improves the analysis. Observe that the unitary $V$ in (\ref{eq:bohr_estimation}) can be interpreted as performing \emph{ideal} Bohr frequency estimation for the \emph{approximate} Hamiltonian $\tilde{H}$.  We can write its action\footnote{We omit the energy register $B$ whose value is alway reversibly reset to the initial state $|0^r\>$.} as
\begin{align}
    V (|\psi\> \otimes |0^r\>) &=
    \sum_{\tilde{\omega}} S(\tilde{\omega}) |\psi\> \otimes 
    |\omega\>
\end{align}
where $\tilde{\omega}$ runs over the Bohr frequencies of the approximate Hamiltonian $\tilde{H}$ and $S(\tilde{\omega})$ denote the corresponding jump operators for $\tilde{H}$. Now everything can be analyzed exactly as in the previous section by replacing $H$ with $\tilde{H}$, $\omega$ with $\tilde{\omega}$, and $S(\tilde{\omega})$ with $S(\omega)$, etc. Of course, we also have to take into account that $V$ (or its complex conjugate version) cannot be realized exactly, but is only approximated by a quantum map that is $\delta$-close in $\diamond$-norm to the unitary map $V$.
The query complexity in (\ref{eq:query_complexityin_thm}) follows from the discussion in the lemma. \qed{}
\end{proof}

\begin{remark}
To summarize, we accomplish the following. Given a block encoding of a Hamiltonian $H$ with an $(r, \alpha)$ rounding promise, we can quantize a Davies generator $\tilde{\cL}$ for the approximate Hamiltonian $\tilde{H}$. Note that the eigenvector of the corresponding approximate walk unitary $\tilde{W}$ with eigenphase $0$ is the purification
\begin{align}
    |\tilde{\sigma}^{1/2}\> 
    &= 
    (\tilde{\sigma}^{1/2} \otimes I) |\Omega\> 
\end{align}
where $\tilde{\sigma}$ denotes the thermal state for $\tilde{H}$ at the inverse temperature $\beta$. It can be shown that 
\begin{align}
    \<\sigma^{1/2}|\tilde{\sigma}^{1/2}\> 
    &=
    \frac{1}{\sqrt{Z} \sqrt{\tilde{Z}}}
    \sum_k \sqrt{ e^{-\beta(\varepsilon_k + \tilde{\varepsilon}_k)}} \, \Tr(\Pi_k) \\
    &\ge 
    \frac{Z}{\tilde{Z}} \\
    &\ge
    e^{-\beta/2^r} \ge 1 - \beta/2^r
\end{align}
where we have used that $\tilde{\varepsilon}_k = \lfloor \varepsilon_k 2^r\rfloor/2^r < \varepsilon_k$ for the first inequality. 
To ensure that the overlap between the ideal and approximate purifications is at least $1-\kappa$, we need to have $r\ge \log_2(\beta/\kappa)$.

It is also important to point out that in general the correct Davies generator $\cL$ and the approximate Davies generator $\tilde{\cL}$ will have different spectral gaps. The difference between these gaps will decrease with increasing $r$.
\end{remark}

%%%%%%

\section{Szegedy walk unitaries for detailed balanced Lindbladians}\label{sec:generalLindbladians}

We extend the quantization method for Davies generators with reflections as coupling operators to detailed balanced Lindbladians. We provide a proof of Theorem~\ref{thm:generalLindbladians}.

Let $\cL$ be a detailed balanced Lindbladian with fixed point $\sigma$ in canonical form as in Definition~\ref{def:Lindbladian_canonical}. It will be convenient to use $\theta$ instead of $\omega$ to denote the Bohr frequencies of the Hamiltonian $h$.
For each Bohr frequency $\theta$, we now define the operator $\Lambda_\theta$ by
\begin{align}
    \Lambda_\theta(A) = \sum_{(k,\ell)\in\cJ_\theta} \Pi_k A \Pi_\ell
\end{align}
where $A$ is an arbitrary matrix. We now argue that for each $\alpha$ there always exists a hermitian matrix $X^\alpha$ such that the jump operators $X^\alpha(\theta)$ for all $\theta$ arise from $X^\alpha$ by the projection
\begin{align}
    \Lambda_\theta(X^\alpha) = X^{\alpha}(\theta)
\end{align}
This is seen as follows. Observe that the condition in $(\ref{eq:eig_modular})$ in the canonical form for detailed balanced Lindbladians means that the jump operators $X^\alpha(\theta)$ are all eigenvectors of the so-called \emph{modular operator} $\Delta_\sigma$ defined by $\Delta_\sigma(A)=\sigma A \sigma^{-1}$. The modular operator is a hermitian operator with respect to the Hilbert-Schmidt trace inner product. This implies that these eigenvectors $X^\alpha(\theta)$ are necessarily all orthogonal to each other as the eigenvalues $e^{-\theta}$ are all different. It turns out that the operators $\Lambda_\theta$ are precisely the orthogonal projectors onto the eigenspaces of $\Delta_\sigma$ with eigenvalues $e^{-\theta}$. The desired matrix $X^\alpha$ is given by $X^\alpha = \sum_\omega X^\alpha(\theta)$.

We need to assume that the hermitian matrices $X^\alpha$ satisfy norm constraint $\|X^\alpha\|\le 1$.  This norm constraint implies that there exist reflections $S^\alpha$ that are block encodings of the matrices $X^\alpha$ of the form 
\begin{align}
    X^\alpha &= (\<0^s| \otimes I) S^\alpha (|0^s\> \otimes I) 
\end{align}

Using these block encodings, we generalize the circuits for $T_0$ and $T_1$ as follows. To simplify the discussion, we consider the case where $\alpha$ is arbitrary but fixed. We drop the index $\alpha$ in the following. 

\begin{figure}[ht]
\begin{center}
\begin{adjustbox}{width=0.9\textwidth}
\begin{quantikz}
\lstick{\texttt{filter}} & \qw \gategroup[7, steps=6,style={dashed, rounded corners}, background]{$T_0$} & \qw & \qw & \qw & \qw & \gate{W} & \qw & & \qw \gategroup[7, steps=6,style={dashed, rounded corners}, background]{$T_1$} & \qw & \qw & \qw & \qw & \gate{W} & \qw \\
\lstick{\texttt{block}}  & \gate{C}  & \qw           & \qw               & \gate{C}  & \qw                 & \octrl{-1} & \qw & & \gate{C}  & \qw               & \qw                     & \gate{C}  & \qw                       & \octrl{-1} & \qw \\
\lstick{\texttt{freq}}   & \qw       & \gate{\Phi}   & \qw               & \qw       & \gate{\Phi^\dagger} & \ctrl{-1}  & \qw & & \qw       & \gate{\bar{\Phi}} & \qw                     & \qw       & \gate{\bar{\Phi}^\dagger} & \ctrl{-1}  & \qw \\
\lstick{\texttt{sys1}}   & \qw       & \qw           & \qw               & \qw       & \qw                 & \qw        & \qw & & \qw       & \ctrl{-1}         & \gate[wires=2]{\bar{S}} & \qw       & \ctrl{-1}                 & \qw        & \qw \\ 
\lstick{\texttt{anc1}}   & \qw       & \qw           & \qw               & \qw       & \qw                 & \qw        & \qw & & \ctrl{-3} & \qw               & \qw                     & \ctrl{-3} & \qw                       & \qw        & \qw \\
\lstick{\texttt{sys2}}   & \qw       & \ctrl{-3}     & \gate[wires=2]{S} & \qw       & \ctrl{-3}           & \qw        & \qw & & \qw       & \qw               & \qw                     & \qw       & \qw                       & \qw        & \qw \\ 
\lstick{\texttt{anc2}}   & \ctrl{-5} & \qw           & \qw               & \ctrl{-5} & \qw                 & \qw        & \qw & & \qw       & \qw               & \qw                     & \qw       & \qw                       & \qw        & \qw 
\end{quantikz}
\end{adjustbox}
%\begin{quantikz}
%& \qw       & \qw           & \qw                       & \qw       & \qw                 & \gate{W}   & \qw \\
%& \gate{C}  & \qw           & \qw                       & \gate{C}  & \qw                 & \octrl{-1} & \qw \\
%& \qw       & \gate{\Phi}   & \qw                       & \qw       & \gate{\Phi^\dagger} & \ctrl{-1}  & \qw \\
%& \qw       & \ctrl{-1}     & \gate[wires=2]{S^\alpha}  & \qw       & \ctrl{-1}           & \qw        & \qw \\ 
%& \ctrl{-3} & \qw           & \qw                       & \ctrl{-3} & \qw                 & \qw        & \qw \\
%& \qw       & \qw           & \qw                       & \qw       & \qw                 & \qw        & \qw \\[0.5cm]
%& \qw       & \qw           & \qw                       & \qw       & \qw                 & \qw        & \qw 
%\end{quantikz}
\end{center}
\caption{Circuits for subisometries $T_0$ and $T_1$.}
\end{figure}
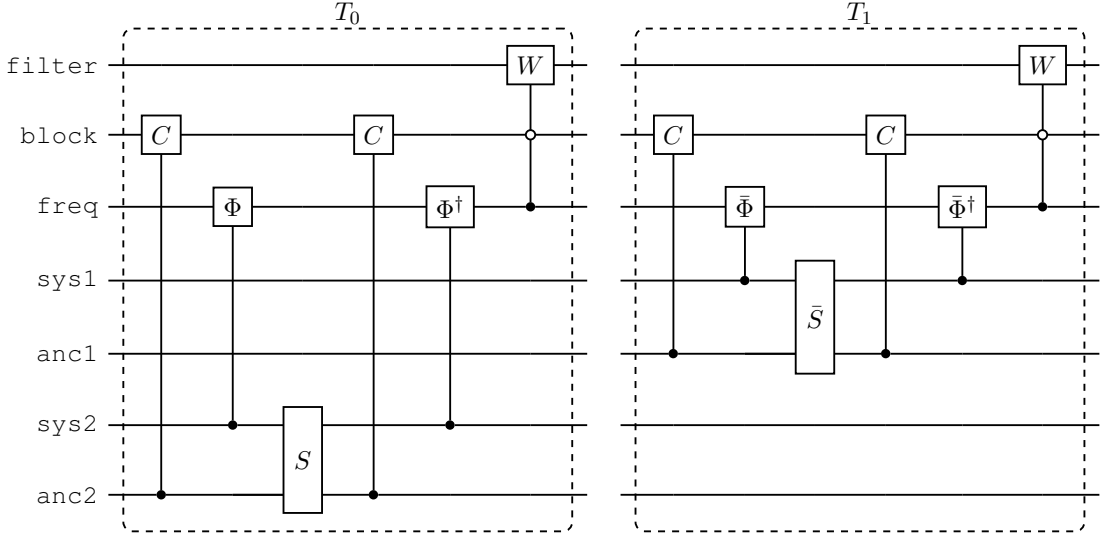

\noindent
It suffices to explain the circuit for $T_0$ as $T_1$ works analogously. We have added the new register \texttt{block}, \texttt{anc1}, and \texttt{anc2} (compare to the circuits in Figure~\ref{fig:circuits_T0_T1}.)
The two system registers \texttt{sys1} and \texttt{sys2} have the ancilla registers \texttt{anc1} and \texttt{anc2} attached, respectively. Both ancilla registers are prepared in $|0^s\>$. The state (either $0$, $1$, $2$) of the \texttt{block} register indicates which block of $S$ is selected. Let $P_0=|0^s\>\<0^s|$ and $P_1=I - P_0$. The state $0$ indicates the diagonal block $P_0 S P_0$ encoding $X$, the state $1$ the off-diagonal blocks $P_0 S P_1$ or $P_1 S P_0$, and the state $2$ the diagonal block $P_1 S P_1$. Let $C$ denote the cyclic shift operator $|x\> \mapsto |x + 1 \pmod{3}\>$. The controlled cyclic shift operator is defined by $P_0 \otimes I + P_1 \otimes C$. The controlled filter rotation $W$ is applied only if the state of \texttt{block} is $|0\>$, which means that the state of \texttt{anc2} was $|0^s\>$ before and after the application of $S$. The gate $\Phi$ and $\Phi^\dagger$ denotes phase estimation with respect to the Hamiltonian $h$ acting on \texttt{sys2} and are used to record the Bohr frequency in the register \texttt{freq}.

As before we use $\theta$ (instead of $\omega)$ to denote the Bohr frequencies of the Hamiltonian $h$.
Define the ``macro'' jump operator $S(\theta)$ by
\begin{align}
    S(\theta) &= 
    \sum_{(k, \ell)\in\cJ_\theta} 
        \big( I \otimes \Pi_k \big) 
        S
        \big( I \otimes \Pi_\ell \big)
\end{align}
where the projectors $\Pi_k$ and $\Pi_\ell$ act on \texttt{sys2}.
For $c\in\{0,1,2\}$, define the ``micro'' jump operators $S(c, \theta)$ acting on the tensor product of \texttt{anc1} and \texttt{sys1} by
\begin{align}
    S(0, \theta) &= (P_0 \otimes I) S(\theta) (P_0 \otimes I) \\
    S(1, \theta) &= (P_0 \otimes I) S(\theta) (P_1 \otimes I) + (P_1 \otimes I) S(\theta) (P_0 \otimes I) \\
    S(2, \theta) &= (P_1 \otimes I) S(\theta) (P_1 \otimes I) 
\end{align}
Here the projectors $P_0$ and $P_1$ act on \texttt{anc2}. Observe that that for all $\theta$ we have
\begin{align}\label{eq:enc_Xtheta}
    \<0^s| \otimes I) S(0, \theta) (|0^s\>\otimes I) = X(\theta).
\end{align}
The corresponding filter functions $G(c, \theta)$ are defined by
\begin{align}
    G(0, \theta) &= G(\theta) \\
    G(1, \theta) &= 0 \\
    G(2, \theta) &= 0 
\end{align}

Now assume that \texttt{block} and \texttt{freq} are properly initialized. We assume that the ancilla registers \texttt{anc1} and \texttt{anc2} are arbitrary. We obtain as the explicit expressions for the subisometries $T_0$ and $T_1$
\begin{eqnarray}
    T_0 
    &=& 
    \sum_{c, \theta}  S(c, \omega) \otimes I \otimes |c, \theta\> \otimes \left(\sqrt{1-G(c, \theta)}|0\> + \sqrt{G(c, \theta)} |1\>\right) \\ 
    T_1 
    &=& 
    \sum_{c, \theta} I \otimes \bar{S}(c, \theta) \otimes |c, \theta\> \otimes \left(\sqrt{1-G(c, \theta)}|0\> + \sqrt{G(c, \theta)} |1\>\right)
\end{eqnarray}
Note that the first (from the left) tensor component refers to \texttt{anc2} and \texttt{sys2} and the second tensor component to \texttt{anc1} and \texttt{sys1}, and the third tensor component to \texttt{block} and \texttt{bohr}.

The full isometry $T$ is obtained by coherently adding the above two sub-isometries $T_0$ and $T_1$ and is given by
\begin{eqnarray}
    T 
    &=& 
    \sum_{c, \theta}  S(c, \omega) \otimes I \otimes |c, \theta\> \otimes \left(\sqrt{1-G(c, \theta)}|0\> + \sqrt{G(c, \theta)} |1\>\right) \otimes |0\> \\
    &+& 
    \sum_{c, \theta} I \otimes \bar{S}(c, \theta) \otimes |c, \theta\> \otimes \left(\sqrt{1-G(c, \theta)}|0\> + \sqrt{G(c, \theta)} |1\>\right) \otimes |1\>
\end{eqnarray}

We call a pair of the form $(c, \theta)$ an \emph{extended} Bohr frequency, where $c\in\{0,1,2\}$ and $\theta$ is a Bohr frequency of $h$. We reserve the symbol $\omega$ to exclusively denote extended Bohr frequencies. For an arbitary extended Bohr frequency $\omega=(c, \theta)$, we define $-\omega=(c, -\theta)$. Observe that this definition make sense as the property (i) 
\begin{align}
    S(-\omega) = S^\dagger(\omega)
\end{align}
holds. It is straightforward to check that the property (ii)
\begin{align}
    \sum_\omega S^\dagger(\omega) S(\omega) = I
\end{align}
holds as well, where $I$ denotes the identity acting on the tensor product \texttt{anc1} and \texttt{sys1}.

Using the definition of $-\omega$ for extended Bohr frequencies, we now define the reflection operator $R$ as
\begin{align}
   R
   &=
   I\otimes I \otimes \; I  \otimes |0\>\<0| \otimes I \\
   &+
   I\otimes I \otimes F \otimes |1\>\<1| \otimes X  
\end{align}
Observe that $R$ applies the flip $\omega=(c, \theta)\leftrightarrow -\omega=(c, -\theta)$ and the bit flip $0\leftrightarrow 1$ whenever the control qubit \texttt{filter} is in the state $|1\>$.

The entire analysis of the method for quantizing Davies generators now carries over because properties (i) and (ii) hold. 
We obtain a block encoding of the matrix
\begin{align}
    I \otimes I + 
    \sum_{\omega}\; \sqrt{G(\omega)G(-\omega)} \; S(\omega) \otimes \bar{S}(\omega) 
    - \frac{G(\omega)}{2} \left( 
        S^\dagger(\omega)S(\omega)\otimes I 
        + I \otimes \bar{S}^\dagger(\omega) \bar{S}(\omega) 
    \right)
\end{align}
This is not yet the desired quantum discriminate $\cQ$ of the Lindbladian $\cL$ because the ancilla register have not been specialized. The first tensor component corresponds to \texttt{anc1} and \texttt{sys1} and the second tensor component to \texttt{anc2} and \texttt{sys2}. Observe now that we can drop all terms with $c=1,2$ because the corresponding filter functions $G(c, \theta)$ are $0$.  Therefore, if we initialize both ancilla registers \texttt{anc1} and \texttt{anc2} in $|0^s\>$, we obtain the desired block encoding
\begin{align}
    \cQ
    &= 
    I \otimes I + 
    \sum_{\theta}\; \sqrt{G(\theta)G(-\theta)} \; X(\theta) \otimes \bar{X}(\theta) 
    - \frac{G(\theta)}{2} \left( 
        X^\dagger(\theta) X(\theta)\otimes I 
        + I \otimes \bar{X}^\dagger(\theta) \bar{X}(\theta) 
    \right)
\end{align}
where we used 
using the identity in (\ref{eq:enc_Xtheta}).

The above construction shows that any detailed balanced Lindbladian can be written as a Davies generator in a larger Hilbert space and all coupling operators of the Davies generator are reflections. It enables us to reduce the problem of quantizing detailed balanced Lindbladians to the special case of Davies generators. 

%%%

\subsection*{Acknowledgements}
P.~W.~and K.~T.~would like to thank Patrick Rall for discussions on energy estimation and Vojt\v{e}ch Havl\'i\v{c}ek for comments on the manuscript.  We would also like to thank Srinivasan Arunachalam, Vojt\v{e}ch Havl\'i\v{c}ek, Giacomo Nannicini, and Mario Szegedy for discussions on classical and quantum walks.  Both authors acknowledge support from the IBM Research Frontiers Institute and the ARO Grant W911NF-20-1-0014.

%%%%%%%%%%%%%%%%%%%%%%%%%%%%%%%%%%%%%%%%%%%%%%%
%\bibliography{references}

\begin{thebibliography}{LPGPH21}

\bibitem[AHN{\etalchar{+}}20]{arunachalam2020simpler}
Srinivasan Arunachalam, Vojtech Havlicek, Giacomo Nannicini, Kristan Temme, and
  Pawel Wocjan.
\newblock {Simpler (classical) and faster (quantum) algorithms for Gibbs
  partition functions}.
\newblock {\em arXiv preprint arXiv:2009.11270}, 2020.

\bibitem[Ali76]{alicki76detailed-balanced}
Robert Alicki.
\newblock On the detailed balance condition for non-hamiltonian systems.
\newblock {\em Reports on Mathematical Physics}, 10(2):249--258, 1976.

\bibitem[Bar17]{bardet2017estimating}
Ivan Bardet.
\newblock Estimating the decoherence time using non-commutative functional
  inequalities.
\newblock {\em arXiv preprint arXiv:1710.01039}, 2017.

\bibitem[BCL{\etalchar{+}}21]{bardet2021modified}
Ivan Bardet, Angela Capel, Angelo Lucia, David P{\'e}rez-Garc{\'\i}a, and
  Cambyse Rouz{\'e}.
\newblock {On the modified logarithmic Sobolev inequality for the heat-bath
  dynamics for 1D systems}.
\newblock {\em Journal of Mathematical Physics}, 62(6):061901, 2021.

\bibitem[BJL{\etalchar{+}}21]{bardet2021group}
Ivan Bardet, Marius Junge, Nicholas LaRacuente, Cambyse Rouz{\'e}, and
  Daniel~Stilck Fran{\c{c}}a.
\newblock Group transference techniques for the estimation of the decoherence
  times and capacities of quantum markov semigroups.
\newblock {\em IEEE Transactions on Information Theory}, 67(5):2878--2909,
  2021.

\bibitem[BP02]{Breuer}
Heinz-Peter Breuer and Francesco Petruccione.
\newblock {\em The Theory of Open Quantum Systems}.
\newblock Oxford, 2002.

\bibitem[BR18]{bardet2018hypercontractivity}
Ivan Bardet and Cambyse Rouz{\'e}.
\newblock Hypercontractivity and logarithmic sobolev inequality for
  non-primitive quantum markov semigroups and estimation of decoherence rates.
\newblock {\em arXiv preprint arXiv:1803.05379}, 2018.

\bibitem[Chi21]{childs}
Andrew~M. Childs.
\newblock Lecture notes on quantum algorithms.
\newblock Technical report, University of Maryland, 2021.
\newblock \url{https://www.cs.umd.edu/~amchilds/qa/qa.pdf}.

\bibitem[CM20]{Carlen2020}
Eric~A. Carlen and Jan Maas.
\newblock Non-commutative calculus, optimal transport and functional
  inequalities in dissipative quantum systems.
\newblock {\em Journal of Statistical Physics}, 178:319--378, 2020.

\bibitem[CRF20]{capel2020modified}
Angela Capel, Cambyse Rouz{\'e}, and Daniel~Stilck Fran{\c{c}}a.
\newblock {The modified logarithmic Sobolev inequality for quantum spin
  systems: classical and commuting nearest neighbour interactions}.
\newblock {\em arXiv preprint arXiv:2009.11817}, 2020.

\bibitem[Dav76]{Davies}
E.~Brian Davies.
\newblock {\em Quantum theory of open systems}.
\newblock IMA, London, 1976.

\bibitem[Dav79]{Davies2}
E.~Brain Davies.
\newblock Generators of dynamical semigroups.
\newblock {\em Journal of Functional Analysis}, 34(3):249--258, 1979.

\bibitem[DR20]{datta2020relating}
Nilanjana Datta and Cambyse Rouz{\'e}.
\newblock {Relating relative entropy, optimal transport and Fisher information:
  a quantum HWI inequality}.
\newblock In {\em Annales Henri Poincar{\'e}}, volume~21, pages 2115--2150.
  Springer, 2020.

\bibitem[GSLW19]{gilyen19quant_sing_val}
Andr\'{a}s Gily\'{e}n, Yuan Su, Guang~Hao Low, and Nathan Wiebe.
\newblock Quantum singular value transformation and beyond: Exponential
  improvements for quantum matrix arithmetics.
\newblock In {\em Proceedings of the 51st Annual ACM SIGACT Symposium on Theory
  of Computing}, STOC 2019, page 193–204, New York, NY, USA, 2019.
  Association for Computing Machinery.

\bibitem[HW20]{harrow2020adaptive}
Aram~W. Harrow and Annie~Y. Wei.
\newblock Adaptive quantum simulated annealing for bayesian inference and
  estimating partition functions.
\newblock In {\em Proceedings of the Fourteenth Annual ACM-SIAM Symposium on
  Discrete Algorithms}, pages 193--212. SIAM, 2020.

\bibitem[KB16]{kastoryano2016quantum}
Michael~J. Kastoryano and Fernando G.~S.~L. Brandao.
\newblock Quantum gibbs samplers: The commuting case.
\newblock {\em Communications in Mathematical Physics}, 344(3):915--957, 2016.

\bibitem[KFGV77]{kossakowski77qdb}
Andrzej Kossakowski, Alberto Frigerio, Vittorio Gorini, and Marurizio Verri.
\newblock {Quantum detailed balance and KMS condition}.
\newblock {\em Communications in Mathematical Physics}, 57:97--110, 1977.

\bibitem[KT13]{kastoryano2013quantum}
Michael~J. Kastoryano and Kristan Temme.
\newblock Quantum logarithmic sobolev inequalities and rapid mixing.
\newblock {\em Journal of Mathematical Physics}, 54(5):052202, 2013.

\bibitem[Lin76]{lindblad76}
G.~Lindblad.
\newblock On the generators of quantum dynamical semigroups.
\newblock {\em Communications in Mathematical Physics}, 48:119–130, 1976.

\bibitem[LPGPH21]{lucia}
Angelo Lucia, David Pérez-García, and Antonio Pérez-Hernández.
\newblock {Thermalization in Kitaev's quantum double models via tensor network
  techniques}, 2021.

\bibitem[OZ99]{olkiewicz1999hypercontractivity}
Robert Olkiewicz and Bogus{\l}aw Zegarlinski.
\newblock {Hypercontractivity in noncommutative Lp spaces}.
\newblock {\em Journal of functional analysis}, 161(1):246--285, 1999.

\bibitem[PW09]{poulin2009}
David Poulin and Pawel Wocjan.
\newblock {Sampling from the thermal quantum Gibbs state and evaluating
  partition functions with a quantum computer}.
\newblock {\em Phys. Rev. Lett.}, 103:220502, Nov 2009.

\bibitem[Ral21]{rall2021}
Patrick Rall.
\newblock Faster coherent quantum algorithms for phase, energy, and amplitude
  estimation.
\newblock \url{https://arxiv.org/abs/2103.09717}, 2021.

\bibitem[SBBK08]{somma2008quantum}
Rolando~D. Somma, Sergio Boixo, Howard Barnum, and Emanuel Knill.
\newblock Quantum simulations of classical annealing processes.
\newblock {\em Physical review letters}, 101(13):130504, 2008.

\bibitem[SL79]{Spohn}
Herbert Spohn and Joel~L. Lebowitz.
\newblock Irreversible thermodynamics for quantum systems weakly coupled to
  thermal reservoirs.
\newblock {\em Advances in Chemical Physics: For Ilya Prigogine}, 38:109--142,
  1979.

\bibitem[Sze04]{szegedy04}
Mario Szegedy.
\newblock Quantum speed-up of markov chain based algorithms.
\newblock In {\em 45th Annual IEEE Symposium on Foundations of Computer
  Science}, pages 32--41, 2004.

\bibitem[Tem13]{temme2013lower}
Kristan Temme.
\newblock {Lower bounds to the spectral gap of Davies generators}.
\newblock {\em Journal of Mathematical Physics}, 54(12):122110, 2013.

\bibitem[Tem17]{temme2017thermalization}
Kristan Temme.
\newblock {Thermalization time bounds for Pauli stabilizer Hamiltonians}.
\newblock {\em Communications in Mathematical Physics}, 350(2):603--637, 2017.

\bibitem[TKR{\etalchar{+}}10]{Chi-squared-mising-stuff}
Kristan Temme, Michael.~J. Kastoryano, Mary~B. Ruskai, Michael.~M. Wolf, and
  Frank Verstraete.
\newblock The $\chi$-divergence and mixing times of quantum markov processes.
\newblock {\em Journal of Mathematical Physics}, 51(12):122201, 2010.

\bibitem[TOV{\etalchar{+}}11]{temme2011quantum}
Kristan Temme, Tobias~J Osborne, Karl~G Vollbrecht, David Poulin, and Frank
  Verstraete.
\newblock Quantum metropolis sampling.
\newblock {\em Nature}, 471(7336):87--90, 2011.

\bibitem[WJDB03]{pspace}
Pawel Wocjan, Dominik Janzing, Thomas Decker, and Thomas Beth.
\newblock Measuring 4-local $n$-qubit observables could probabilistically solve
  {PSPACE}.
\newblock \url{https://arxiv.org/pdf/quant-ph/0308011.pdf}, 2003.

\bibitem[YAG12]{yung}
Man-Hong Yung and Al{\'a}n Aspuru-Guzik.
\newblock {A quantum{\textendash}quantum Metropolis algorithm}.
\newblock {\em Proceedings of the National Academy of Sciences},
  109(3):754--759, 2012.

\end{thebibliography}

\newcommand{\etalchar}[1]{$^{#1}$}

%%%%%%%%%%%%%%%%%%%%%%%%%%%%%%%%%%%%%%%%%%%%%%%%%%%%%%%%%%%%%%

\section{Appendix}\label{sec:appendix}

The appendix is an adaptation of \cite[17.2 How to quantize a Markov chain]{childs}.

\begin{definition}[Quantization]
Let $T\in\bC^{N\times M}$ an isometry from $\bC^M$ to $\bC^N$ with $\mathrm{im}(T)=\cA$\footnote[4]{The dimension of the subspace $\cA$ is necessarily equal to $M$.}, $\Pi\in\bC^{N\times N}$ an orthogonal projector with $\mathrm{im}(T) = \cA$, and $S\in\bC^{M\times M}$ a reflection. Let $Q\in\bC^{M\times M}$ be a hermitian matrix, $|\varphi_1\>,\ldots,|\varphi_M\>$ an orthonormal basis of $\bC^M$ consisting of eigenvectors of $Q$ with eigenvalues $\lambda_1=1\gneqq \lambda_2 = 1 - \Delta \ge \lambda_3 \ge \ldots \ge \lambda_M \gneqq -1$.  We refer to $\Delta$ as the classical gap.

We define the quantization of $Q$ to be
\begin{equation}
U=S(2\Pi - I)\in\bC^{N\times N}
\end{equation}
provided that the isometry $T$ and reflection $S$ satisfy the condition
\begin{equation}
    T^\dagger S T = Q.
\end{equation}
We refer to the value $\theta=\arccos(1-\Delta)$ as the quantum gap of the quantization $U$ of $Q$.

For $j\in\{1,\ldots,M\}$, define the states
\begin{equation}
    |\chi_j\>=T|\varphi_j\> \in \bC^N.
\end{equation}
Let $\mathcal{B}$ be the subspace $\mathcal{A}+S\mathcal{A}$, where $S\mathcal{A}=\{ S|\chi\> : |\chi\>\in \mathcal{A}\}$, and $\mathcal{B}^\perp$ the orthogonal complement of $\mathcal{B}$.
\end{definition}

%%% lemma: subspace calA
\begin{lemma}
The subspace spanned $|\chi_1\>,\ldots,|\chi_M\>$ coincides with the subspace $\cA$.  
\end{lemma}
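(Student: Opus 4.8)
The plan is to prove the two inclusions separately, with the reverse inclusion being the only one requiring a word of justification.

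First I would observe that since $T$ is an isometry we have $T^\dagger T = I_M$, so that $\langle \chi_i \,|\, \chi_j\rangle = \langle \varphi_i \,|\, T^\dagger T \,|\, \varphi_j\rangle = \langle \varphi_i\,|\,\varphi_j\rangle = \delta_{ij}$; hence $|\chi_1\rangle,\ldots,|\chi_M\rangle$ form an orthonormal system in $\bC^N$, in particular they are linearly independent and their span is an $M$-dimensional subspace. Since each $|\chi_j\rangle = T|\varphi_j\rangle$ lies in $\mathrm{im}(T) = \cA$ by definition, we immediately get $\mathrm{span}\{|\chi_1\rangle,\ldots,|\chi_M\rangle\} \subseteq \cA$.

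For the reverse inclusion, the cleanest route avoids dimension counting entirely: let $|\chi\rangle \in \cA = \mathrm{im}(T)$ be arbitrary, so $|\chi\rangle = T|v\rangle$ for some $|v\rangle \in \bC^M$. Expanding $|v\rangle = \sum_{j=1}^M c_j |\varphi_j\rangle$ in the orthonormal basis $\{|\varphi_j\rangle\}$ and using linearity of $T$ gives $|\chi\rangle = \sum_{j=1}^M c_j\, T|\varphi_j\rangle = \sum_{j=1}^M c_j |\chi_j\rangle$, which is a linear combination of the $|\chi_j\rangle$. Hence $\cA \subseteq \mathrm{span}\{|\chi_1\rangle,\ldots,|\chi_M\rangle\}$, and combining the two inclusions yields equality. (Alternatively, one could invoke the footnote that $\dim\cA = M$ and conclude from the first paragraph alone, since an $M$-dimensional subspace contained in an $M$-dimensional subspace must equal it.)

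There is no real obstacle here; this is a routine fact about isometries. The only thing worth being careful about is not conflating "isometry" with "unitary" — $T$ need not be surjective onto $\bC^N$ — but since the claim only concerns $\mathrm{im}(T) = \cA$ and not all of $\bC^N$, this causes no difficulty, and the argument above uses only $T^\dagger T = I_M$ and the definition of the image.
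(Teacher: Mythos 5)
Your proof is correct, but it takes a different (more elementary) route than the paper's. The paper proves the lemma in one line: it computes $\sum_{j\in[M]} |\chi_j\>\<\chi_j| = \sum_{j\in[M]} T|\varphi_j\>\<\varphi_j|T^\dagger = TT^\dagger = \Pi$, and reads off the claim from the fact that the image of the left-hand side (a sum of rank-one projectors onto an orthonormal family, as you note) is $\mathrm{span}\{|\chi_1\>,\ldots,|\chi_M\>\}$ while the image of $\Pi$ is $\cA$. You instead argue by two explicit set inclusions, using only $T^\dagger T = I_M$ and linearity of $T$. Both are sound; the paper's identity is slicker and directly exhibits $\Pi$ as the projector onto the span (which is arguably the fact actually used downstream), while yours is more pedestrian but also more self-contained, since it never needs to observe that $TT^\dagger$ equals the given projector $\Pi$ onto $\mathrm{im}(T)$. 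Your parenthetical dimension-counting alternative is also fine and is closer in spirit to a consequence of the paper's identity.
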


%%% proof

\begin{proof}
We have $\sum_{j\in [M]} |\chi_j\>\<\chi_j| = \sum_{j\in [M]} T |\varphi_j\>\<\varphi_j| T^\dagger = T T^\dagger = \Pi$. \qed{}
\end{proof}

%%%

\begin{theorem}[Spectrum of quantization]
The subspace $\mathcal{B}$ and its orthogonal complement $\mathcal{B}^\perp$ are invariant under $U$.  The spectrum of $U$ restricted to $\mathcal{B}$ is as follows:
\begin{enumerate}
\item For $j=1$, the one-dimensional subspace $\cV_1$ spanned by $|\chi_1\>$ is invariant under $U$ and the eigenvector of $U$ in $\cV_1$ is 
\begin{equation}
    |\psi_1\> = |\chi_1\> \in \mathcal{B}
\end{equation}
with eigenvalue $1$.

\item For $j\ge 2$, the two-dimensional subspace $\cV_j$ spanned by $|\chi_j\>$ and $S|\chi_j\>$ is invariant under $U$ and the two eigenvectors of $U$ in $\cV_j$ are
\begin{equation}
    |\psi^{\pm}_j\> = |\chi_j\> - \mu^{\pm}_j S|\chi_j\> \in \mathcal{B}
\end{equation}
with corresponding eigenvalues $\mu^{\pm}_j$, where
\begin{equation}
    \mu^{\pm}_j = \lambda_j \pm i \sqrt{1-\lambda_j^2} = e^{\pm i \, \mathrm{arccos}(\lambda_j)}.
\end{equation}
\end{enumerate}
\end{theorem}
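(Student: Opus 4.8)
The plan is to analyze the action of $U = S(2\Pi - I)$ on the low-dimensional subspaces generated by the $|\chi_j\>$ and their reflections, exactly as in the standard analysis of Szegedy walk operators. First I would establish that $\cB = \cA + S\cA$ and $\cB^\perp$ are invariant under $U$: since $2\Pi - I$ fixes $\cA$ pointwise and negates $\cA^\perp$, and since $\cA \subseteq \cB$ while $S\cA \subseteq \cB$, one checks that $U$ maps $\cA$ into $S\cA \subseteq \cB$ and maps $S\cA$ into $S(\text{span of } \cA \cup S\cA) \subseteq \cB$ (using $S^2 = I$); invariance of $\cB^\perp$ then follows because $U$ is unitary. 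On $\cB^\perp$ one has $\Pi = 0$ so $U$ acts as $-S$, which is irrelevant to the claimed spectrum.

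Next, for $j = 1$: since $Q|\varphi_1\> = |\varphi_1\>$ and $T^\dagger S T = Q$, I would show $S|\chi_1\> = |\chi_1\>$. Indeed, $\<\chi_1| S |\chi_1\> = \<\varphi_1| Q |\varphi_1\> = 1 = \|\chi_1\|^2 = \|S\chi_1\|^2$, so equality in Cauchy–Schwarz forces $S|\chi_1\> = |\chi_1\>$. Then $U|\chi_1\> = S(2\Pi - I)|\chi_1\> = S|\chi_1\> = |\chi_1\>$ because $|\chi_1\> \in \cA = \mathrm{im}(\Pi)$, giving eigenvalue $1$ on $\cV_1 = \mathrm{span}\{|\chi_1\>\}$.

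For $j \ge 2$: I would show $\cV_j = \mathrm{span}\{|\chi_j\>, S|\chi_j\>\}$ is $U$-invariant and two-dimensional. Compute $U|\chi_j\> = S|\chi_j\>$ (again since $|\chi_j\> \in \cA$). For the other generator, $U(S|\chi_j\>) = S(2\Pi - I)(S|\chi_j\>) = 2S\Pi S|\chi_j\> - |\chi_j\>$. The key computation is $\Pi S |\chi_j\> = \sum_k |\chi_k\>\<\chi_k| S |\chi_j\> = \sum_k \<\varphi_k|Q|\varphi_j\> |\chi_k\> = \lambda_j |\chi_j\>$, using $\sum_k |\chi_k\>\<\chi_k| = \Pi$ (the preceding lemma) and $T^\dagger S T = Q$ with $|\varphi_k\>$ eigenvectors of $Q$. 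Hence $U(S|\chi_j\>) = 2\lambda_j S|\chi_j\> - |\chi_j\>$, confirming invariance. In the (generally non-orthogonal) basis $\{|\chi_j\>, S|\chi_j\>\}$, the matrix of $U|_{\cV_j}$ is $\begin{pmatrix} 0 & -1 \\ 1 & 2\lambda_j \end{pmatrix}$, whose characteristic polynomial is $\mu^2 - 2\lambda_j \mu + 1 = 0$, giving $\mu_j^\pm = \lambda_j \pm i\sqrt{1 - \lambda_j^2} = e^{\pm i \arccos \lambda_j}$ (note $|\lambda_j| < 1$ for $j \ge 2$, so these are genuinely complex and distinct, and $\cV_j$ is genuinely two-dimensional). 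Solving for the eigenvectors in this basis directly yields $|\psi_j^\pm\> = |\chi_j\> - \mu_j^\pm S|\chi_j\>$ up to normalization.

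Finally I would check that these subspaces $\cV_1, \cV_2, \dots, \cV_M$ together span $\cB$ (they contain all $|\chi_j\>$ and all $S|\chi_j\>$), so the listed eigenvectors exhaust the spectrum of $U|_\cB$. The main obstacle — really the only subtle point — is handling the non-orthogonality of $|\chi_j\>$ and $S|\chi_j\>$ carefully: the Gram matrix has off-diagonal entry $\<\chi_j|S|\chi_j\> = \lambda_j$, so when extracting eigenvalues one must work with the matrix of $U|_{\cV_j}$ in this skew basis rather than assuming orthonormality, and one should separately confirm dimension two by noting $\lambda_j \ne \pm 1$. Everything else is routine linear algebra built on the two identities $T^\dagger S T = Q$ and $TT^\dagger = \Pi$.
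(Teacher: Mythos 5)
Your proof is correct and follows essentially the same route as the paper's: both rest on the two computations $U|\chi_j\> = S|\chi_j\>$ and $U(S|\chi_j\>) = 2\lambda_j S|\chi_j\> - |\chi_j\>$ (derived from $TT^\dagger = \Pi$, $T^\dagger S T = Q$, $S^2=I$), leading to the quadratic $\mu^2 - 2\lambda_j\mu + 1 = 0$ for the eigenvalues. Your presentation via the matrix of $U|_{\cV_j}$ in the skew basis is a cosmetic variant of the paper's direct verification of the ansatz $|\psi_j^\pm\> = |\chi_j\> - \mu_j^\pm S|\chi_j\>$, and the extra remarks you include (invariance of $\cB$ and $\cB^\perp$, and linear independence of $|\chi_j\>$ and $S|\chi_j\>$ from $|\lambda_j|<1$) are worthwhile details that the paper's proof leaves implicit.
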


\begin{proof}
We use the properties of the isometry $T$ to proceed.  We begin with the case $j\ge 2$, that is, $\lambda_j$ is bounded away from $1$ by at least the spectral gap $\Delta$. First, we obtain 
\begin{align}
    U |\chi_j\> 
    &= S (2\Pi - I)|\chi_j\> \\
    &= S (T T^\dagger - I) T |\varphi_j\> \\
    &= 2 S T |\varphi_j\> - ST|\varphi_j\> \\
    &= S |\chi_j\>.
\end{align}
Second, we obtain
\begin{align}
    U S|\chi_j\> 
    &= S(2\Pi - I) S T |\varphi_j\> \\
    &= S(2 T T^\dagger - I) ST |\varphi_j\> \\
    &= (2S T T^\dagger S T - T) |\varphi_j\> \\
    &= (2S T D - T) |\varphi_j\> \\
    &= (2\lambda_j S T - T) |\varphi_j\> \\
    &= (2\lambda_j S - I) |\chi_j\>.
\end{align}
We see that the subspace spanned by $|\chi_j\>$ and $S|\chi_j\>$ is invariant under $U$ so we can find eigenvectors of $U$ within this subspace. Let
\begin{equation}
    |\psi^{\pm}_j\> = |\chi_j\> - \mu^{\pm}_j S|\chi_j\>.
\end{equation}
We have
\begin{align}
    U|\psi^\pm_j\> 
    &= S|\chi_j\> - \mu^\pm_j (2\lambda_j S - I)|\chi_j\> \\
    &= \mu^\pm_j |\chi_j\>  - (2\lambda_j \mu^\pm_j - 1) S |\chi_j\>
\end{align}
Therefore, $|\psi^\pm_j\>$ is an eigenvector of $U$ with eigenvalue $\mu^\pm_j$ provided that 
\begin{equation}
    (\mu^\pm_j)^2 - 2 \lambda_j\mu^\pm_j + 1 = 0,
\end{equation}
that is,
\begin{equation}
    \mu^\pm_j = \lambda_j \pm i \sqrt{1 - \lambda_j^2} = e^{\pm i \, \mathrm{arccos}(\lambda_j)}.
\end{equation}

We now consider the case $j=1$.  We have
\begin{equation}
    1 = \<\varphi_1|Q|\varphi_1\> = \<\varphi_1|T^\dagger S T|\varphi_1\> = \<\chi_1|S|\chi_1\>,
\end{equation}
implying that $|\chi_1\>$ is an eigenvector of $S$ with eigenvalue $1$.  We have
\begin{equation}
    U|\chi_1\> = S(2\Pi - I)|\chi_1\> =  S |\chi_1\> = 1,
\end{equation}
where we used $\Pi|\chi\>=T T^\dagger T|\varphi_1\>= T|\varphi_1\>=|\chi_1\>$.
\end{proof}

%%%

\begin{lemma}[Lower bound on quantum gap]
Let $\Delta$ denote the classical gap of $Q$. The quantum gap of the corresponding unitary $U$ is $\theta=\arccos(1-\Delta)$ and is quadratically larger than the classical gap because 
\begin{equation}
    \theta \ge \sqrt{2\Delta}.
\end{equation}
\end{lemma}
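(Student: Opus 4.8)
The plan is to reduce the claimed bound to the elementary trigonometric estimate $\cos t \ge 1 - t^2/2$, valid for all $t \ge 0$. First I would set $\theta = \arccos(1-\Delta)$ and record the basic facts. Since the classical gap satisfies $1 - \Delta = \lambda_2 \in (-1,1)$ by the definition of quantization, we have $\theta \in (0,\pi)$, and by construction $\cos\theta = 1-\Delta$, equivalently $\Delta = 1 - \cos\theta$. In particular both $\theta$ and $\sqrt{2\Delta}$ are nonnegative, so the desired inequality $\theta \ge \sqrt{2\Delta}$ is equivalent, after squaring, to $\theta^2 \ge 2\Delta = 2(1 - \cos\theta)$.

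Next I would establish the auxiliary inequality $\cos t \ge 1 - t^2/2$ for every $t \ge 0$. Define $g(t) = \cos t - 1 + t^2/2$; then $g(0) = 0$ and $g'(t) = t - \sin t$. Since $\sin t \le t$ for all $t \ge 0$, we get $g'(t) \ge 0$, so $g$ is nondecreasing on $[0,\infty)$ and hence $g(t) \ge g(0) = 0$ there. Applying this with $t = \theta \in (0,\pi) \subseteq [0,\infty)$ gives $\cos\theta \ge 1 - \theta^2/2$, i.e.\ $\theta^2 \ge 2(1 - \cos\theta) = 2\Delta$, and taking square roots yields $\theta \ge \sqrt{2\Delta}$, as claimed.

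There is essentially no serious obstacle here; the only points requiring a modicum of care are that squaring is a legitimate step, which is immediate because both sides are nonnegative, and that $\arccos$ takes values in $[0,\pi]$, so the monotonicity argument for $g$ applies on the relevant range. One could alternatively invoke the Taylor expansion of cosine with Lagrange remainder, but the monotonicity argument above is self-contained and avoids any appeal to convergence of series.
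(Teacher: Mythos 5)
Your proof is correct and follows essentially the same route as the paper: both reduce the claim to the inequality $\cos\theta \ge 1 - \theta^2/2$ and then rearrange. The only difference is that you supply a self-contained monotonicity argument for that inequality, which the paper simply invokes.
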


%%%

\begin{proof}
Write the second largest eigenvalue as $\lambda_2 = 1 - \Delta = \cos(\theta)$ for a suitable $\theta$.  We have
\begin{equation}
    1 - \Delta = \cos(\theta) \ge 1 - \frac{\theta^2}{2},
\end{equation}
which implies the bound $\theta\ge\sqrt{2\Delta}$. \qed{}
\end{proof}

%%%

\end{document}